\documentclass[twocolumn,showpacs,showkeys,prx]{revtex4}%
\usepackage{todonotes}
\usepackage{times}
\usepackage{amsfonts}
\usepackage{amsmath}
\usepackage{amssymb}
\usepackage{graphicx,color}%
\usepackage{graphics}
\usepackage{txfonts}
\usepackage{listings}
\usepackage{algorithm}
\usepackage{algpseudocode}
\setcounter{MaxMatrixCols}{30}
\providecommand{\U}[1]{\protect\rule{.1in}{.1in}}
\newtheorem{theorem}{Theorem}

\newtheorem{lemma}[theorem]{Lemma}

\newtheorem{proposition}[theorem]{Proposition}
\newtheorem{remark}[theorem]{Remark}

\newenvironment{proof}[1][Proof]{\noindent\textbf{#1.} }{\ \hfill\rule{0.5em}{0.5em}}




\def\bA{\textbf A}
\def\bb{\textbf b}
\def\cM{\cal M}

\begin{document}
\title{Spanning connectivity in a multilayer network and its relationship to site-bond percolation}

\author{Saikat Guha$^1$, Donald Towsley$^2$, Philippe Nain$^3$, {\c C}a{\u g}atay {\c C}apar$^4$, Ananthram Swami$^5$, Prithwish Basu$^1$}
\affiliation{
$^1$\textit{Raytheon BBN Technologies, Cambridge, MA 02138, USA},\\
$^2$\textit{University of Massachusetts, Amherst, MA 01003, USA,}\\
$^3$\textit{Inria, 06902 Sophia Antipolis Cedex, France}\\
$^4$\textit{Ericsson Research, San Jose, CA 95134, USA,}\\
$^5$\textit{US Army Research Laboratory, Adelphi, MD 20783, USA}}
\keywords{percolation, networks, multilayer graph}
\pacs{89.75.Hc, 05.70.Jk, 87.23.Ge}

\begin{abstract}
We analyze the connectivity of an $M$-layer network over a common set of nodes that are active only in a fraction of the layers. Each layer is assumed to be a subgraph (of an underlying connectivity graph $G$) induced by each node being active in any given layer with probability $q$. The $M$-layer network is formed by aggregating the edges over all $M$ layers. We show that when $q$ exceeds a threshold $q_c(M)$, a giant connected component appears in the $M$-layer network---thereby enabling far-away users to connect using `bridge' nodes that are active in multiple network layers---even though the individual layers may only have small disconnected islands of connectivity. We show that $q_c(M) \lesssim \sqrt{-\ln(1-p_c)}\,/{\sqrt{M}}$, where $p_c$ is the bond percolation threshold of $G$, and $q_c(1) \equiv q_c$ is its site percolation threshold. We find $q_c(M)$ exactly for when $G$ is a large random network with an arbitrary node-degree distribution. We find $q_c(M)$ numerically for various regular lattices, and find an exact lower bound for the kagome lattice. Finally, we find an intriguingly close connection between this multilayer percolation model and the well-studied problem of site-bond percolation, in the sense that both models provide a smooth transition between the traditional site and bond percolation models. Using this connection, we translate known analytical approximations of the site-bond critical region, which are functions only of $p_c$ and $q_c$ of the respective lattice, to excellent general approximations of the multilayer connectivity threshold $q_c(M)$.
\end{abstract}
\maketitle

The last few years has seen a surge of interest in multilayer networks, several properties of various genres of which have been studied, much of which has been covered in these two review articles~\cite{Kiv14, Boc14}. Specific example studies include the diffusion dynamics of multilayer networks~\cite{Gom13}, cascades~\cite{Bru12, Gao13}, spectral properties~\cite{Rib13}, robustness analysis stemming from overlapping multilayer links~\cite{Cel13}, growing random multilayer networks~\cite{Nic13}, epidemic spread~\cite{Mar11}, a tensorial formulation~\cite{Dom13}, and algorithmic complexity of finding short paths through co-evolving multilayer networks~\cite{Bas15}. The connectivity properties of random multilayer networks have also been studied, such as the study of the properties of the giant connected component (GCC) in a random network with correlated multiplexicity, i.e., where the node degree distributions across layers have positive (or negative) correlations~\cite{Lee12}. 

The multilayer network model we study in this paper was inspired by a multi-channel wireless adhoc communication network~\cite{WNAN}, where each node only uses a small subset of all the available channels at any given time (to save energy---battery life of a radio transceiver for instance), and the consideration of the minimum number of channels in which each node should be active to ensure long range connectivity.

\begin{figure}
\centering
\includegraphics[width=0.9\columnwidth]{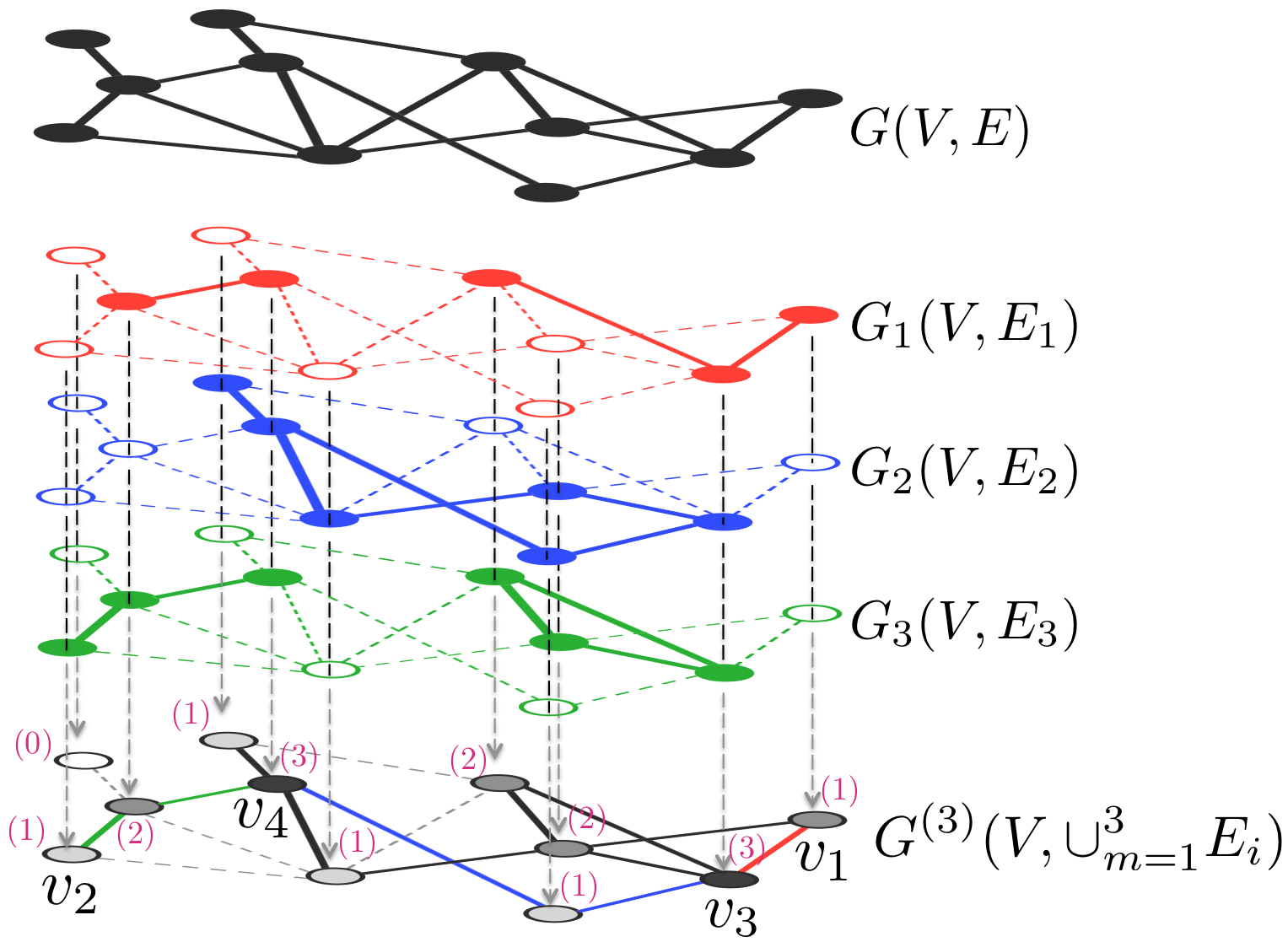}
\caption{(Color online) Schematic of a $3$-layer network. The numbers of layers in which each node is occupied (active) are shown.}
\label{fig:setup}
\end{figure}
We consider a set of users connected via $M$ co-existing networks $G_1, \ldots, G_M$. Let us assume that each user (node) is active only in a subset of these networks. Consequently, a user who is active in both $G_1$ and $G_2$ can help connect two other users that are active in $G_1$ alone, and in $G_2$ alone, respectively, by forming a bridge. Fig.~\ref{fig:setup} illustrates an example with $M=3$ networks (`layers'), where a path connecting $v_1$ and $v_2$ must traverse all three layers, and one such path is shown to go through the bridge nodes $v_3$ and $v_4$, both of which are occupied in more than one layer. 

Some concrete examples of such multilayer networks are: (1) a network of cities connected via different airline companies where each city is served only by a subset of all the airlines~\cite{Bas15, Buc13}, (2) a network of users with accounts on multiple online social networks~\cite{Mur14}, and (3) a military communication network of units equipped with radios that can listen and transmit simultaneously on a subset of multiple frequencies~\cite{WNAN}. Each of these scenarios have one feature in common: the multilayer network is formed over a {\em common} set of nodes via co-existing means of connectivity. In other words, each node in the multilayer network is one single entity (e.g., a city, a social network user, or a multi-channel radio) that may be active simultaneously in a subset of multiple layers, where each layer that a given node is active in, provides a distinct mode for that node to connect to its neighboring nodes that are also active in that layer. 

In our analysis in this paper, we will make a simplifying assumption, that each network layer is a subgraph of a {\em common} underlying connectivity graph $G(V, E)$ whose edge set $E$ defines all the {\em possible} connections, some of which may be dormant if the two nodes an edge connects are not active in at least one common layer. The underlying connectivity graphs for the aforesaid examples are: the network of airway passages connecting the cities, the underlying friendship network (who is a friend of whom on social networks), and the Euclidean geometric graph induced by the locations of the multichannel radios and their maximum range, respectively.

Each node will be assumed to be active in a given layer with probability $q$, and the node occupancies in each layer will be taken to be independent. The subgraph corresponding to the $m$-th layer $G_m(V, E_m)$ is obtained by removing all the edges of $G(V, E)$ both of whose end nodes are not active in the $m$-th layer. The merged (random) graph $G^{(M)}(V, \cup_{m=1}^M E_m) \subset G$ represents the effective multilayer network whose edges can connect the nodes in $V$. So, the $M$-layer graph $G^{(M)}$ is formed by directly aggregating the edges over all $M$ layers. For an edge to exist in $G^{(M)}$, the two nodes it connects must be active in at least one common layer. Since each node is a single physical entity, one can think of the inter-layer connectivity graph at a given node to be a $k$-clique, where $k$ is the number of layers that node is occupied in. Our goal in this paper is to study the threshold value of the single-layer node-occupation probability $q$, which we will denote $q_c(M)$, when a GCC (or a spanning cluster) appears in the $M$-layer network $G^{(M)}$, at which point distant users can connect using a series of bridge nodes that are active in multiple layers, even though the individual layers may only have small disconnected islands of connectivity. Clearly, for $M=1$, this model reduces to the standard i.i.d. site percolation problem, and thus $q_c(1) = q_c$, the site-percolation threshold of $G$. In Fig.~\ref{fig:squaregrid_picture}, we show an illustrative numerical example for $2$ layers over a square grid.

As the reader may already have noted, the model we analyze in this paper is insufficient to accurately model most practical multilayer networks. For instance, the assumption that each individual network layer is an induced subgraph of a common underlying graph may not be accurate. For example, in a multilayer social network, a node's neighbors (friends) in Linkedin may be different from its neighbors in Facebook. On the other hand, in the multichannel wireless ad hoc network example described above, the assumption that each layer samples from one underlying connectivity graph is quite accurate. Several other interesting extensions of our model are described in the Conclusions section of this paper.


\begin{figure}[htb]
\centering
\includegraphics[width=\columnwidth]{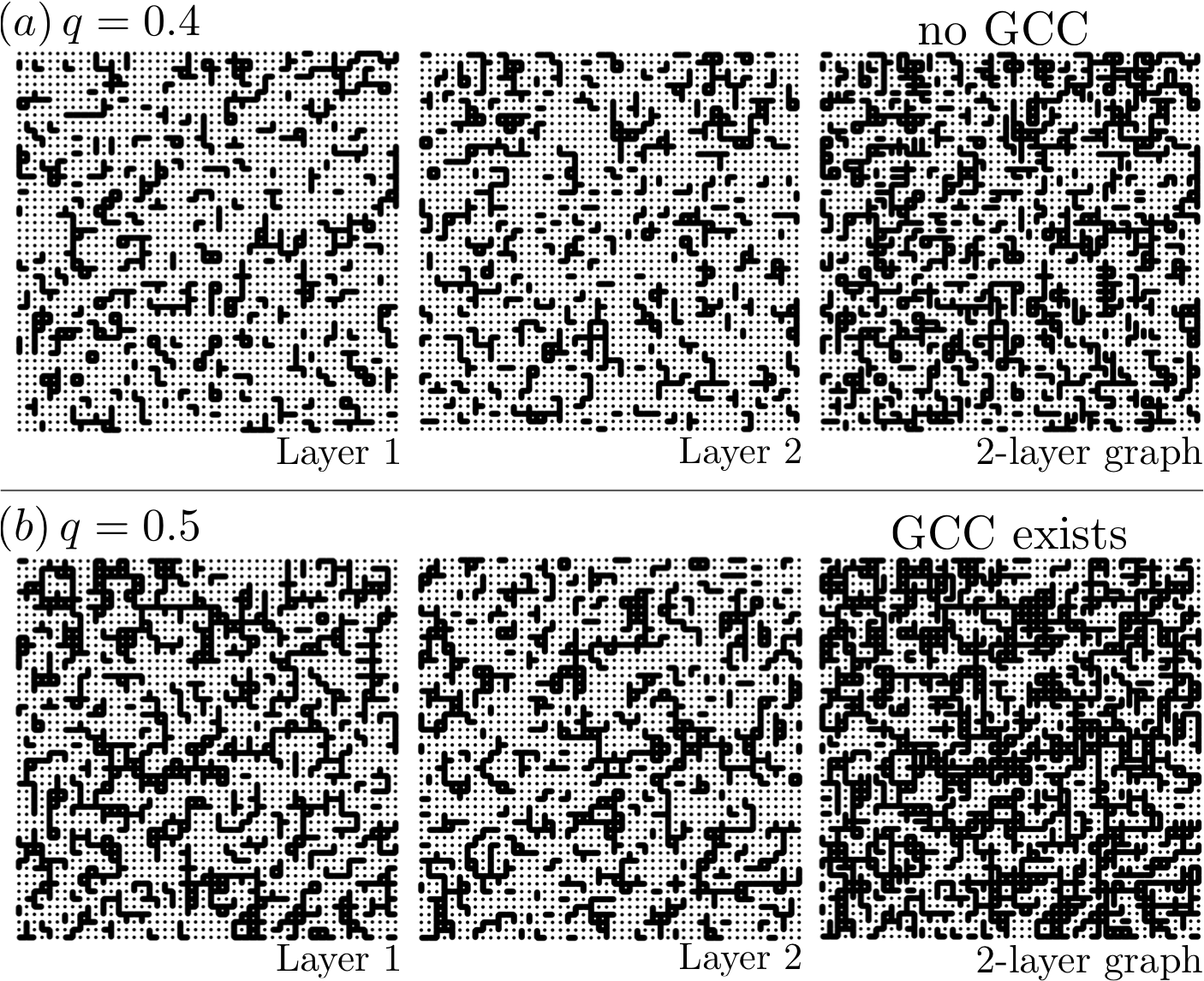}
\caption{Two independent site-percolation instances (`layers') of a square grid $G$ are shown, for site-occupation probability (a) $q = 0.4$ and (b) $q = 0.5$. Each site in each layer is occupied with probability $q$, and a bond is activated when both sites at its end points are occupied. Active bonds are shown by black line segments. A $2$-layer lattice is formed by first marking as occupied all those sites in $G$ that are occupied in at least one of two independent {\em layers} (site-percolation instances of $G$), and then activating all the bonds that have sites at both their end points occupied. The site-percolation threshold of a square grid, $q_c \equiv q_c(1) \approx 0.59$, which is why the single layer graphs do not have a giant connected component (GCC) either for $q=0.4$ or $q=0.5$. However, the $2$-layer percolation threshold for the square lattice, $q_c(2) \approx 0.47$. Thus, the $2$-layer graph created with $q=0.4$ does not exhibit a GCC, whereas the one created with $q=0.5$ does.}
\label{fig:squaregrid_picture}
\end{figure}
The first major contribution of this paper, described in Section~\ref{sec:exact}, is the detailed study of the behavior of $q_c(M)$ for a general underlying connectivity graph $G$. We show that $q_c(M) \sim 1/\sqrt{M}$ for $M$ large. This implies that when each node is occupied in roughly $c\sqrt{M}$ (of the $M$) layers, $c$ being a constant, spanning connectivity emerges in the $M$-layer network. We show that $c$ approaches $\sqrt{-\ln(1-p_c)}$ as $M \to \infty$, where $p_c$ is the bond-percolation threshold of $G$. In Section~\ref{sec:random}, we find $q_c(M)$ exactly when $G$ is a large random network with an arbitrary node-degree distribution. In Section~\ref{sec:lattice}, we evaluate $q_c(M)$ numerically for various regular lattices, and find an analytical lower bound for $q_c(M)$ for when $G$ is a regular kagome lattice. We show, for a general graph $G$, that $q_c(M) \lesssim \sqrt{-\ln(1-p_c)}\,/{\sqrt{M}}$, where $p_c$ is the bond percolation threshold of $G$, and that the inequality is asymptotically tight when $M \to \infty$. Clearly, for $M=1$, $q_c(1) \equiv q_c$, where $q_c$ is the site percolation threshold of $G$. Therefore as $M$ goes from $1$ to $\infty$, $q_c(M)$ goes from being a function solely of $q_c$ to being solely a function of $p_c$. This suggests that our multilayer percolation model forms a smooth transition between the standard site and bond percolation models. This leads to our second main contribution, described in Section~\ref{sec:sitebond}: an intriguingly close connection between the aforesaid multilayer percolation model and the well-studied problem of site-bond (or, mixed) percolation---a percolation process defined on the single-layer graph $G$, in which each site and each bond in $G$ is independently activated with probability $q$ and $p$, respectively. Both models provide a smooth transition between the traditional i.i.d. site and bond percolation models. Using this connection, we show a way to translate analytical approximations to the site-bond critical region (the region in the $(p, q)$ space where a GCC exists with high probability) that are functions solely of $p_c$ and $q_c$, to an excellent general approximation of the multilayer percolation thresholds, $q_c(M)$. We conclude the paper in Section~\ref{sec:conclusions}.

\section{Multilayer percolation in a large graph}\label{sec:exact}

\begin{figure}
\centering
\includegraphics[width=0.9\columnwidth]{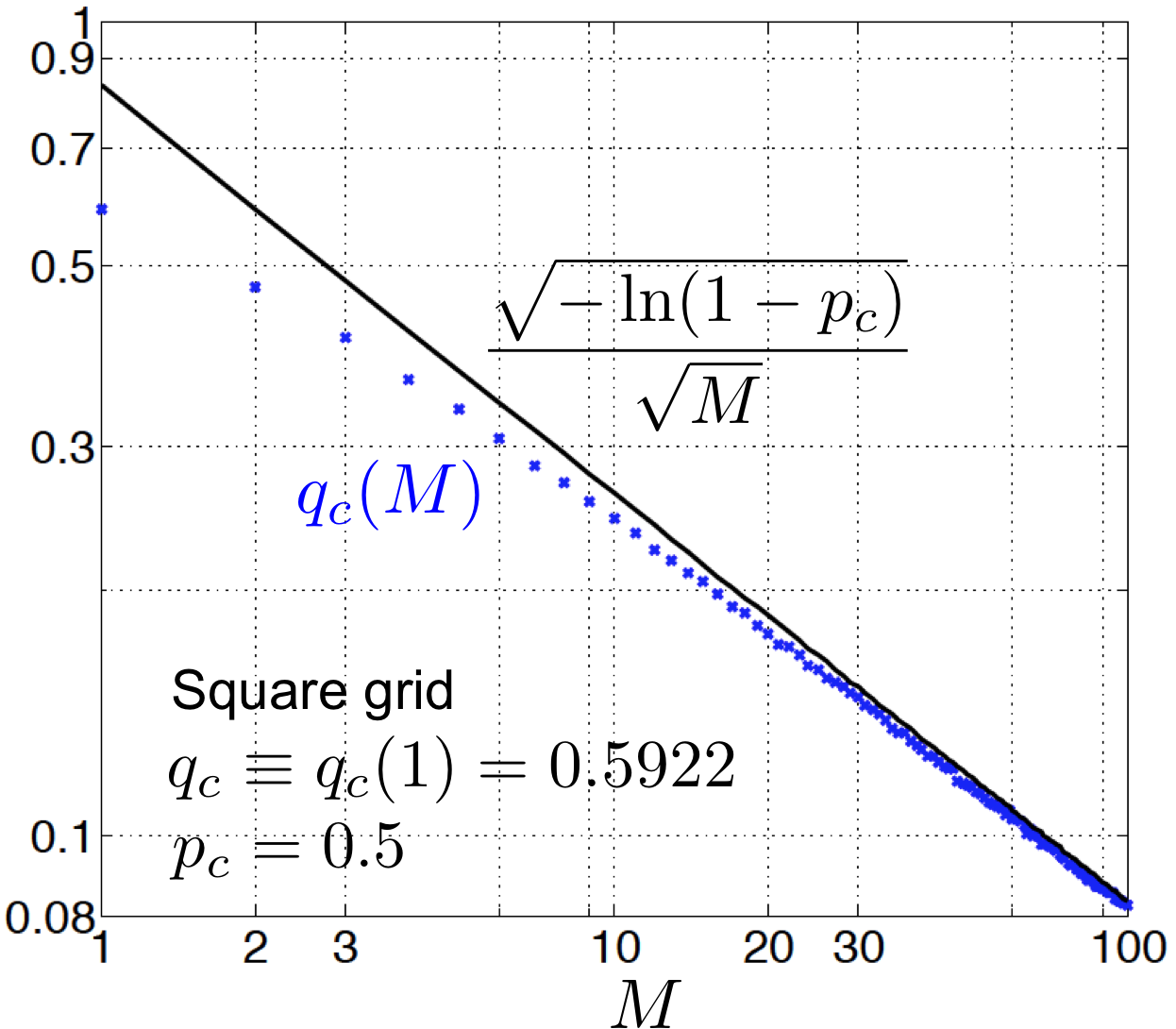}
\caption{Multilayer percolation thresholds for a square grid (blue crosses) were numerically evaluated for $M = 1, \ldots, 100$, where $q_c(1) = q_c = p_c=0.59274605079210(2)$ is the site percolation threshold~\cite{Jac15}. The upper bound $\sqrt{-\ln(1-p_c)}/\sqrt{M}$, plotted in black (solid), is a function only of the bond percolation threshold $p_c = 0.5$, and is an asymptotically tight bound in the large $M$ limit.}
\label{fig:grid_UB_intuition}
\end{figure}

The multilayer (random) graph $G^{(M)}$ is completely specified by the underlying connectivity graph $G$, the number of layers $M$, and the site-occupation probability $q$ (for each site in each layer). It is simple to see that the induced marginal probability, $p$, of any given bond in $G^{(M)}$ to be active, is given by $p = 1 - (1-q^2)^M$. Now, suppose we choose $q = c/\sqrt{M}$, with $c$ a constant. In other words, each node is occupied, on an average, in $c\sqrt{M}$ layers. Then, in the limit $M \to \infty$, we have $p = 1-\lim_{M \to \infty}(1-q^2)^M = 1 - e^{-c^2}$. If the bond activation events were statistically independent, then for $p \ge p_c$, where $p_c$ is the bond percolation threshold of $G$, a giant connected component (GCC) would appear in $G^{(M)}$. Note here that $p \ge p_c$, in the $M \to \infty$ limit, is equivalent to $c \ge \sqrt{-\ln(1-p_c)}$. However, the bond activation events in $G^{(M)}$ are not independent. They have a positive spatial correlation. In other words, one bond being active makes it more likely for its neighboring bond to be active. Since $p$ is the fractional size of the edge set in the underlying graph that is active, introducing positive bond-to-bond nearest-neighbor spatial correlations, for a given $p$, implies that the active bonds will be closer together, and hence $p > p_c$ should be more than sufficient for a GCC to exist in $G^{(M)}$, thereby making $\sqrt{-\ln(1-p_c)}/\sqrt{M}$ an upper bound to the multilayer percolation threshold $q_c(M)$, for any finite $M$. Note however that the above argument (of why percolation should happen at a strictly lower value of $p$ for a correlated bond process) is not rigorous. We conjecture however that it holds true for the particular correlated bond process induced by our multilayer percolation model on an arbitrary graph $G$ that has a well defined non-trivial i.i.d. bond percolation threshold. 

Clearly, for $M=1$, $G^{(M)}$ is a simple site-percolation instance over $G$ with site-occupation probability $q$. Hence $q_c(1) = q_c$ is the site-percolation threshold of $G$. In Fig.~\ref{fig:grid_UB_intuition}, we plot the numerically-evaluated values of $q_c(M)$ as a function of $M$ for a square grid. Note that as $M$ grows large, the aforesaid upper bound gets progressively tighter. This indicates that the reason which caused in the first place $\sqrt{-\ln(1-p_c)}/\sqrt{M}$ to be an upper bound (and not equal) to $q_c(M)$---that the bond-activation events of $G^{(M)}$ being positively correlated---dwindles away in the large $M$ limit. One can show that this is indeed true. In other words, if the site-occupation probability is chosen to be $q = c/\sqrt{M}$ in our multilayer graph construction, for $M$ large, the induced bond-activation events on the multilayer graph $G^{(M)}$ approach towards being statistically independent. Therefore, in this limit, $G^{(M)}$ resembles an i.i.d. bond-percolation instance of $G$. Thus when $p > p_c$, the bond-percolation threshold of $G$, a GCC appears in $G^{(M)}$. Therefore, $q \sim \sqrt{-\ln(1-p_c)}/\sqrt{M}$ in the $M \to \infty$ limit, showing that the upper bound is asymptotically tight. In Theorem~\ref{thm:multilayersite_conj} of Appendix~\ref{app:exact}, we provide a rigorous proof of the independence of the bond-activation events of $G^{(M)}$ in the $M \to \infty$, albeit only for the case when $G$ is a tree. We conjecture (and have ample numerical evidence in its favor), that this fact about the asymptotic independence of bond activation events (of $G^{(M)}$ for $q = c/\sqrt{M}$ in the $M \to \infty$ limit) holds true for any arbitrary graph $G$ that has a well defined and non-trivial bond-percolation threshold $p_c$.

Further, recalling that the marginal probability of each bond's activation satisfies $p = 1 - (1-q^2)^M$, it is simple to see that in the $M \to \infty$ limit, if the site-occupation probability $q$ is chosen to be any function of $M$ that diminishes any faster than $1/\sqrt{M}$, then all the bonds of $G^{(M)}$ are inactive with high probability (w.h.p.), whereas if $q$ is chosen as any function of $M$ that diminishes even a little slower compared to $1/\sqrt{M}$, then all the bonds of $G^{(M)}$ are active w.h.p., thereby showing that the $1/\sqrt{M}$ scaling of $q$ is a sharp connectivity threshold.

For $M=1$, the multilayer model is equivalent to standard i.i.d. site percolation, and as $M$ increases, even though our multilayer construction is inherently a site-based model (i.e., defined and driven solely by site occupations), the site-occupation thresholds for long-range connectivity are determinable only from the bond-percolation threshold of the underlying connectivity graph when the number of layers grows large. The thresholds $q_c(M)$ decrease as $M$ increases, but they decrease as $c/\sqrt{M}$, where the constant $c$ is only a function of $p_c$, the bond-percolation threshold, when $M$ is large. Therefore, bond percolation naturally emerges from a multi-layer site percolation problem. This in turn suggests that there may be a deeper connection of this problem to the traditional site-bond (or, mixed) percolation problem, which is a model that interpolates between the standard site and bond percolation models in a natural way, with each site occupied independently with probability $q$ and each bond activated independently with probability $p$. In Section~\ref{sec:sitebond}, we will make this connection quantitatively rigorous, and will show how one can translate known results on the critical region for the site-bond percolation problem to the multilayer problem, and vice versa.

Finally, let us see how one can further tighten the upper bound to $q_c(M)$ discussed above. The probability that any given bond in $G^{(M)}$ is active, $p = 1 - (1 - q^2)^M$. The bonds in $G^{(M)}$ can be thought of as generated by a positively-correlated bond-percolation process on $G$ with a marginal bond probability $p$. Hence, $p > p_c$ is sufficient for a GCC to appear in $G^{(M)}$, where $p_c$ is the bond percolation threshold of $G$. It thus follows that $q \ge \sqrt{1-(1-p_c)^{1/M}}$ is sufficient for percolation. Therefore the multilayer threshold must satisfy, $q_c(M) \le \sqrt{1-(1-p_c)^{1/M}}$. One can further upper bound the right-hand side of the above as $\sqrt{1-(1-p_c)^{1/M}} \le \sqrt{-\ln(1-p_c)}/\sqrt{M}$, $\forall M \ge 1$, to obtain, $q_c(M) \le \sqrt{-\ln(1-p_c)}/\sqrt{M}$. In Appendix~\ref{app:intuitive}, we provide an additional intuition behind the above upper bound.

\section{Analytical results for a multilayer random graph}\label{sec:random}

In this section, we will consider multilayer percolation on large random graphs with an arbitrary, but known, node-degree distribution. This model of random graph analysis---based on a probability-generating function (PGF) approach---is known as the {\em configuration model} (CM)~\cite{New00, New01}. The CM has found use in modeling several real life networks that have non-Poisson distributions, such as the truncated power-law and exponential distributions~\cite{Wat98, New00a}. Newman studied a multilayer random graph model, which is related but different from ours~\cite{New03}. 

Let $p_k$ denote the probability that a randomly selected node has degree $k$. Let $\cM$ denote the set of $M$ layers and let $q_k$ denote the probability that a node of degree $k$ is occupied in layer $m \in \cM$. As before, the events that a node is occupied in different layers are assumed to be independent. $p_k (1-(1-q_k)^M)$ is the probability of a node having degree $k$ and being occupied in at least one layer, and
\[ F_0(x) = \sum_{k=0}^\infty p_k (1-(1-q_k)^M) x^k \]
is the PGF of this distribution. Let us follow a randomly chosen edge $e(u, v)$ starting from a node $u$, occupied in $n \le M$ layers, to node $v$. Node $v$ has degree distribution proportional to $kp_k$~\cite{New00}. Thus the PGF of the distribution of $v$ having degree $k$ {\em and} being occupied in at least one of $n$ layers that $u$ is occupied in, is given by
\[ F_n(x) = \frac{\sum_{k=1}^\infty kp_k(1-(1-q_k)^n)x^{k-1}}{z}\]
for $1 \le n \le M$, where $z=\sum_k kp_k$ is the average node degree.

Let $H_0(x)$ denote the PGF of the cluster size that a randomly selected node belongs to. It is easy to argue that,
\begin{equation} \label{eq:H0}
H_0(x) = 1-F_0(1) + x\sum_{k=0}^\infty p_k \sum_{l=1}^M \binom{M}{l}q_k^l(1-q_k)^{M-l} H_l(x)^k. \end{equation}
Let us define $H_n(x)$ as the PGF for the size of a cluster that a neighbor of the node belongs to provided that it is occupied in at least one of the $n$ layers in which the randomly selected node is occupied in. It is given by
\begin{align*}
H_n(x) 
& = 1-F_n(1) + x \sum_{k=1}^\infty \frac{kp_k}{z}\\
& \;\times \;\sum_{l=1}^M \left[\binom{M}{l}-\binom{M-n}{l}\right] q_k^l(1-q_k)^{M-l}H_l(x)^{k-1} 
\end{align*} 
where $\binom{j}{i}$ is defined to be zero whenever $i>j$. The combinatorial term in the inner sum corresponds to the number of combinations of $l$ layers at a neighbor that overlap the $n$ layers of the original node. When $l > M-n$, all possible combinations of $l$ layers at the neighbor overlaps the $n$ layers in the original node, yielding $\binom{M}{l}$ whereas when $l \le M-n$, we have to subtract out the number of $l$ layer combinations that do not overlap the $n$ layer combinations at the original node, $\binom{M-n}{l}$. We are interested in the average cluster size that a randomly selected node belongs to, which is given by $\mu_0 = H'_0(1)$,
\begin{equation}
\mu_0 = F_0(1) + x\sum_{k=0}^\infty kp_k \sum_{l=1}^M \binom{M}{l}q_k^l(1-q_k)^{M-l} \mu_l
\label{eq:mu0}
\end{equation}
where $\mu_n = H'_n(1)$, i.e.,
\begin{align*}
\mu_n &= F_n(1) + \sum_{k=1}^\infty \frac{(k-1)kp_k}{z}\\
&\;\times\;\sum_{l=1}^M \left[\binom{M}{l}-\binom{M-n}{l}\right] q_k^l(1-q_k)^{M-l}\mu_l .
\end{align*}
Consider the case $q_k=q$. We introduce the matrix $\bA(q) = [A_{ij}]$ with 
\begin{equation}
A_{ij} = 
\begin{cases}
-C\Bigl(\binom{M}{j}-\binom{M-i}{j}\Bigr) q^j(1-q)^{M-j} & i\ne j \\
1-C\Bigl(\binom{M}{j}-\binom{M-j}{j}\Bigr) q^j(1-q)^{M-j} & i=j
\end{cases}
\label{eq:Adefinition}
\end{equation}
where $C = \sum_k(k-1)kp_k/z$.  Now
$\mu = \left[\mu_1,\ldots , \mu_M\right]$ is a solution of
\begin{equation}
\bA (q)\mu^T = \bb^T
\label{eq:AmuT}
\end{equation}
where $\bb = (F_1(1), \ldots F_M(1))$. 
We define the critical occupancy probability, $q_c(M)$, such that when $q > q_c(M)$, there exists one infinite size spanning cluster (or, GCC) w.h.p., and when $q < q_c(M)$, there exist only finite size clusters.
\begin{figure}
\centering
\includegraphics[width=\columnwidth]{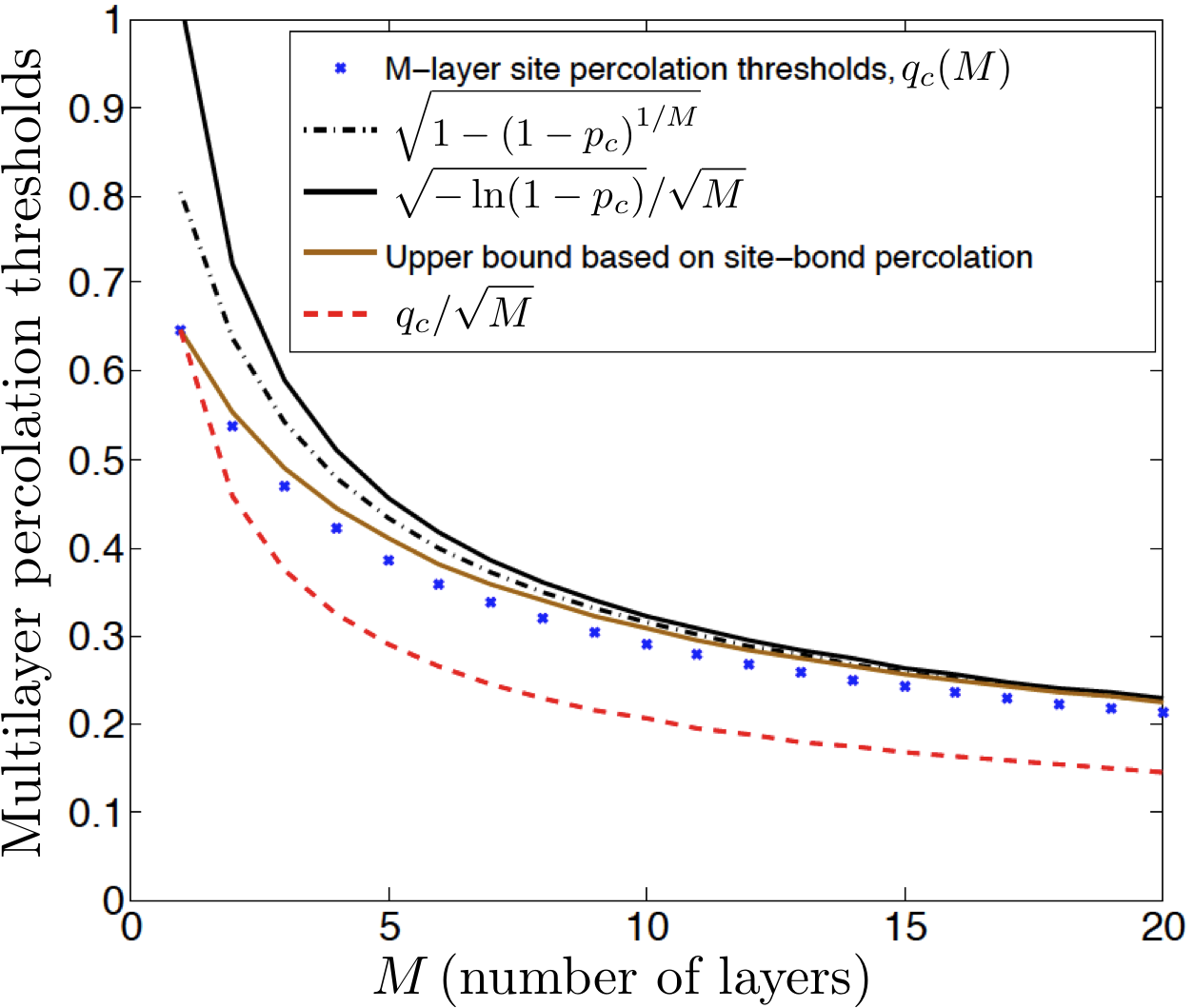}
\caption{Upper and lower bounds on the multilayer site-percolation thresholds for a random graph with a truncated power law degree distribution $p_k = 0, k = 0$, $p_k = Ck^{-\tau}e^{-k/\kappa}, k \ge 1$, with $\kappa = 10$ and $\tau = 2.5$. The exact $q_c(M)$ values were obtained by solving Eq.~\eqref{eq:detAq}.}
\label{fig:randomgraph_thresholds}
\end{figure}
Assume that $C>1$ (which is required for $G$ to have a GCC w.h.p. even with all nodes and bonds occupied). The size of a GCC is a constant fraction of the size of the graph. Hence, for an infinite random graph, appearance of a GCC in the $M$-layer graph is equivalent to $\mu_0$ diverging (to infinity). As Eq.~\eqref{eq:mu0} shows, $\mu_0$ is a constant plus a linear combination of $\left\{\mu_k\right\}$, $k=1, \ldots, M$. Hence, at least one element of $\mu$ must diverge for $\mu_0$ to diverge. Since $\boldsymbol b$ is a constant vector, this can only happen if ${\boldsymbol A}$ is singular. Thus, $q_c(M)$ is given by the solution of
\begin{equation}
\det (\bA (q)) = 0
\label{eq:detAq}
\end{equation}
within the interval $[0,1]$. Numerically, it is easy to verify that $\det (\bA (q))$ has a unique zero in $[0,1]$ for all $M \ge 1$ as long as $C > 1$. We conjecture that this is always true. In Appendix~\ref{app:random}, we provide a rigorous proof of the fact that $q_c(M)$ is given by the smallest solution of $\det (\bA (q)) = 0$ in the interval $(0, 1)$.

For the case $M=1$, this corresponds to finding the solution of $1 - qC = 0$, which yields the known result $q_c = 1/C$~\cite{New01}. For the case of $M=2$,  $q_c$ is the unique solution of the following polynomial
\begin{equation} \label{eq:det-2}
  C^2q^4-C^2q^3-Cq+1 = 0
\end{equation}
within the interval $[0,1]$, which is
\begin{equation}
q_c{(2)} = \frac14\left[a+1-\sqrt{3+2a-a^2}\right]
\end{equation}
with $a = \sqrt{1+8/C}$. It is easy to show that there exists at least one real root in the interval $[0,1]$ provided that $C>1$ as $\det (\bA (0)) = 1$ and $\det (\bA (1)) = 1-C < 0$. 
\begin{figure}
\centering
\includegraphics[width=\columnwidth]{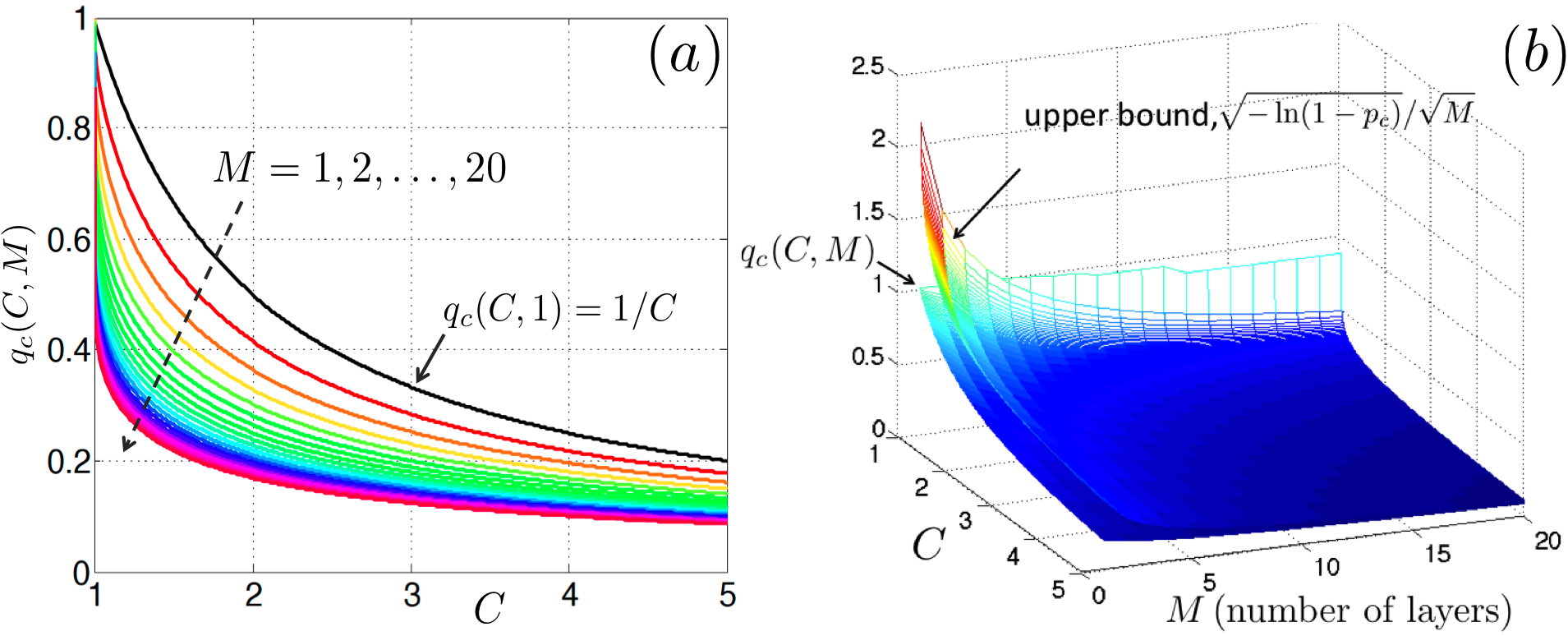}
\caption{(Color online) (a) The multilayer threshold $q_c(C, M)$ as a function of $C$ for $M=1, 2, \ldots, 20$ layers. (b) Comparing $q_c(C,M)$ with $\sqrt{-\ln(1-p_c)}/\sqrt{M}$ (with $p_c = 1/C$), which is seen to be an upper bound to $q_c(C,M)$ as argued in Appendix~\ref{app:intuitive}. The random graphs used for these evaluations were chosen from a truncated power law node-degree distribution $p_k = 0, k = 0$, $p_k = Ck^{-\tau}e^{-k/\kappa}, k \ge 1$, with $\kappa = 10$ and $\tau = 2.5$.}
\label{fig:randomgraph_results}
\end{figure}

\begin{figure}
\centering
\includegraphics[width=\columnwidth]{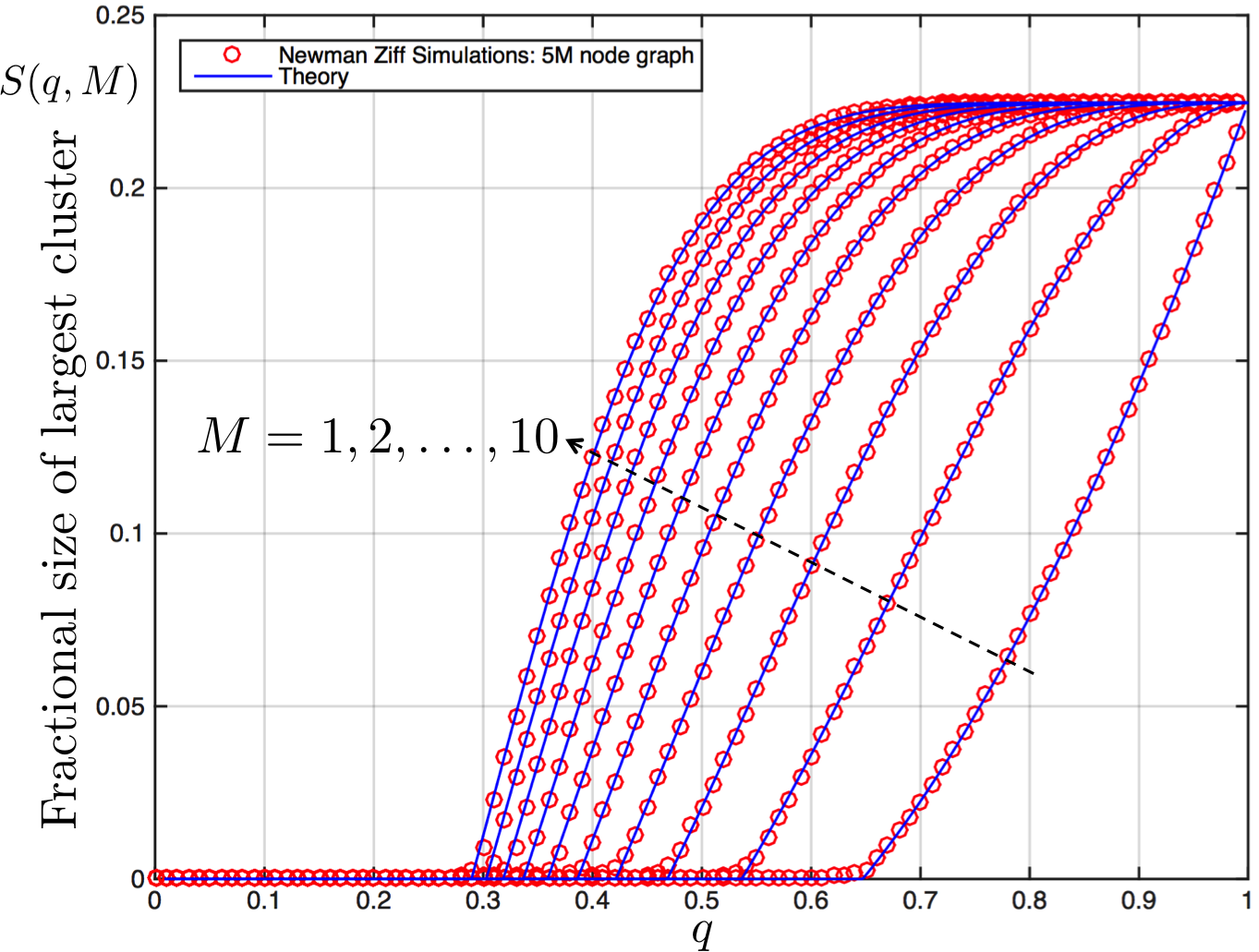}
\caption{(Color online) Size of the spanning cluster for multilayer percolation as a function of the single-layer site-occupation probability $q$ for a random graph with a truncated power law node-degree distribution; $p_k = 0, k = 0$, $p_k = Ck^{-\tau}e^{-k/\kappa}, k \ge 1$, with $\kappa = 10$ and $\tau = 2.5$. The theory plots were obtained by solving Eqs.~\eqref{eq:SqM_formula} and~\eqref{eq:fixedpoint}. The numerical plots were obtained by Newman-Ziff style simulations, via averaging over $10$ instances of a $5$ million node random graph.}
\label{fig:randomgraph_clustersize}
\end{figure}
In the supercritical regime, there is one infinite size cluster and many small finite size clusters. The PGF of the size of a small cluster is given by $H_0(x)/H_0(1)$  with $H_0(x)$ given by (\ref{eq:H0}).   The average size of these clusters is $\mu_0/H_0(1)$ with $\mu_0$ given by (\ref{eq:mu0}). Finally, the fractional size of the giant connected component is given by $S = 1 - H_0(1)$. For the $M$-layer random graph, the fractional size of the giant connected component is given by 
\begin{equation}
S(q,M) = 1 - (1-q)^M - \sum_{k=0}^\infty p_k \sum_{l=1}^M \binom{M}{l}q^l(1-q)^{M-l}u_l^k
\label{eq:SqM_formula}
\end{equation}
where $\boldsymbol{u} = [u_1, u_2, \ldots, u_M]^T \in [0, 1]^M$ is given by the following self-consistency matrix equality
\begin{equation}
\boldsymbol{u} = \boldsymbol{s} + \boldsymbol{B}\boldsymbol{v}
\label{eq:fixedpoint}
\end{equation}
where $s_l = (1-q)^l$, 
$
v_l = f(u_l) \equiv \sum_{k=1}^\infty ({kp_k}/{z})u_l^{k-1},
$
and $B_{ij} = \left[\binom{M}{j} - \binom{M-i}{j}\right]q^j(1-q)^{M-j}$.

In order to verify our theory, we performed numerical simulations of layered site percolation on random graphs with up to $M=20$ layers, and $5$ million nodes, and compared with results obtained from the theory we developed above. We chose random graphs with node degrees distributed according to the truncated power law~\cite{Cal00},
\begin{equation}
p_k = \left\{
\begin{array}{ll}
0 & {\text{for}}\,k=0,\\
Dk^{-\tau}e^{-k/\kappa} & {\text{for}}\,k \ge 1,
\end{array}
\right.
\label{eq:cluster}
\end{equation}
where $D = \left[{\rm Li}_\tau(e^{-1/\kappa})\right]^{-1}$ is a normalization constant, with the polylogarithm function, ${\rm Li}_s(x) \equiv \sum_{k=1}^\infty {x^k}/{k^s}$. We chose this distribution for our simulations since it is seen in a number of real-world social networks including collaboration networks of movie actors~\cite{Ama00} and
scientific collaborations based on co-authorship of publications~\cite{New04}. The pure power-law distributions seen in Internet data are also included as a special case $\kappa \to \infty$~\cite{Fal99}. 

For our numerical evaluations of $q_c(M)$, we used the Newton-Raphson method to extract the unique root of Eq.~\eqref{eq:detAq}. Fig.~\ref{fig:randomgraph_thresholds} shows an example calculation of $q_c(M)$ for a truncated power law node degree distribution, and various bounds to it discussed earlier. In order to solve for the largest cluster size $S(q,M)$, we solved Eq.~\eqref{eq:fixedpoint} numerically using a multi-dimensional iterative fixed-point method to search for the unique solution of ${\boldsymbol u} \in [0, 1]^M$. One interesting thing to note is that the multilayer threshold $q_c(C, M)$ is only a function of $M$ and $C \equiv \sum_{k=1}^\infty (k-1)kp_k/z$, regardless of the actual distribution $\left\{p_k\right\}$. In Fig.~\ref{fig:randomgraph_results}, we plot $q_c(C, M)$ for different values of $C$ and $M$. The evaluations of the largest cluster size $S(q,M)$ as a function of $q$---both using the solution of Eq.~\eqref{eq:SqM_formula} as well as using efficient Newman-Ziff style Monte-Carlo simulations on random graph instances with $5$ million nodes---for a truncated power law node-degree distribution, with $\kappa = 10$ and $\tau = 2.5$, are summarized in Fig.~\ref{fig:randomgraph_clustersize}. Excellent agreement is seen between theory and numerical simulations.

\section{Numerical results for regular lattices}\label{sec:lattice}

\begin{figure}
\centering
\includegraphics[width=\columnwidth]{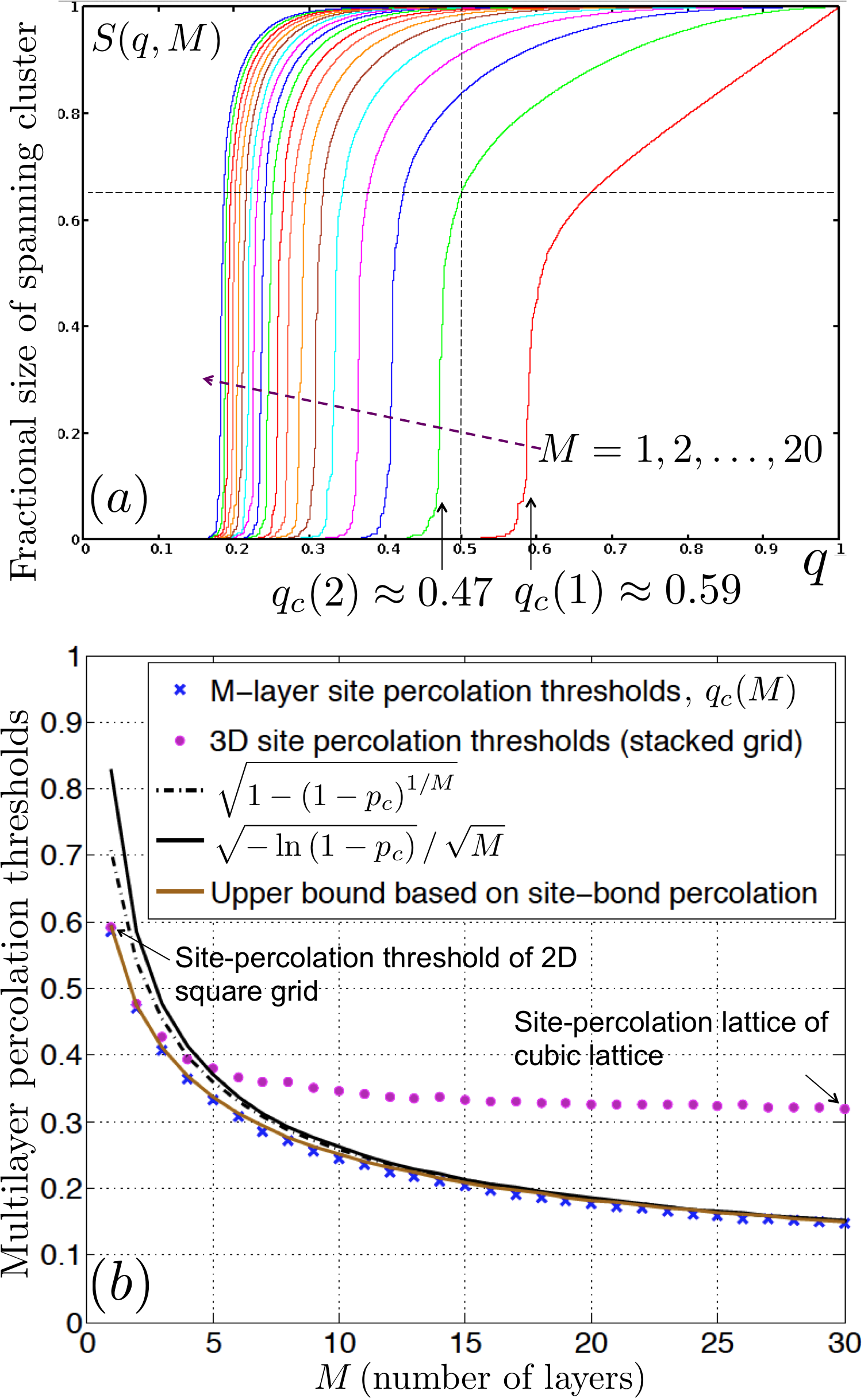}
\caption{(Color online) Thresholds and bounds for the multilayer square lattice. Simulations performed on a $262144$ node lattice.}
\label{fig:squaregrid_results_main}
\end{figure}
We numerically evaluated $q_c(M)$ for various regular lattices, including the square, triangular, kagome, and archimedian lattices. The results for a regular square grid are shown in Fig.~\ref{fig:squaregrid_results_main}, and for a regular kagome lattice in Fig.~\ref{fig:kagome_results_main}. The size of the largest component for the $M$-layer lattice exhibits the usual second-order phase transition at $q = q_c(M)$. In Section~\ref{sec:LBincorrect}, we will show a very compelling (yet, incorrect in general) argument as to why the following general lower bound to $q_c(M)$ should hold: $q_c(M) \ge q_c/\sqrt{M}$, where $q_c \equiv q_c(1)$ is the site percolation threshold. In Section~\ref{sec:kagome}, we will prove an analytical lower bound to $q_c(M)$ for the kagome lattice---adapting the Scullard-Ziff triangle-triangle transformation technique,---which is seen to be extremely close to $q_c/\sqrt{M}$.
\begin{figure}
\centering
\includegraphics[width=\columnwidth]{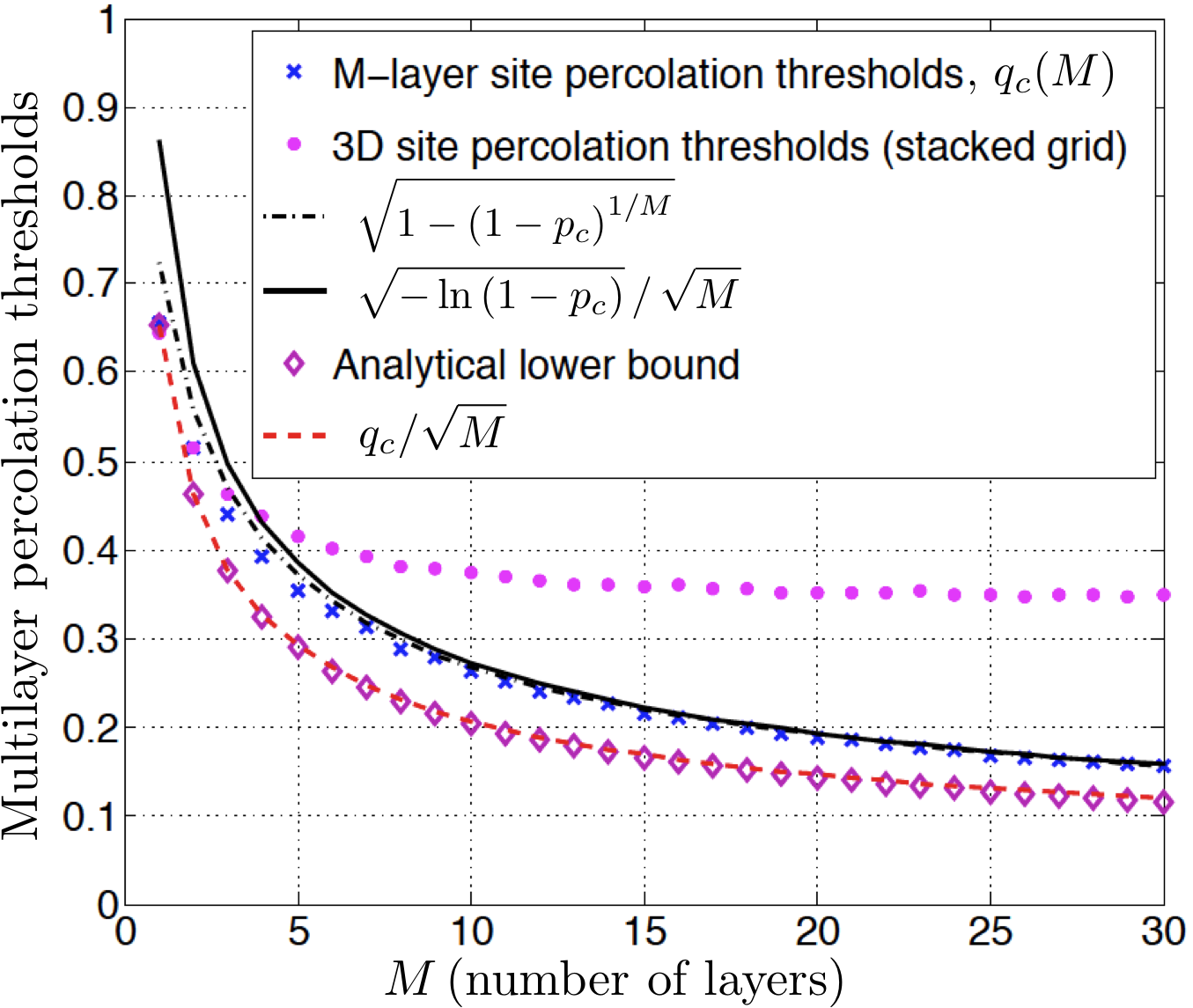}
\caption{(Color online) Thresholds and bounds for the multilayer kagome lattice. Simulations performed on a $196608$ node lattice.}
\label{fig:kagome_results_main}
\end{figure}

\subsection{An intuitive lower bound to $q_c(M)$ that holds for most regular lattices, but not in general}\label{sec:LBincorrect}

Consider an i.i.d. site-percolation process with site-occupation probability $Q$, and an $M$-layer process with single-layer site-occupation probability $q$, such that the marginal probability of a single bond to be activated in either case are identical, i.e., $Q^2 = 1 - (1-q^2)^M$. Recall now our argument above that as $M$ increases from $1$ to $\infty$, the multilayer graph $G^{(M)}$, at percolation, transitions from being identical to a pure site-percolation instance of $G$ (where bond activation events have positive spatial correlation) to a pure bond-percolation instance of $G$ (where the bond activation events are independent). Hence, one might argue that for the same total number of bonds in the respective percolating instances of a graph, if the multilayer graph percolates, that the i.i.d. site-occupied graph must also percolate (since the bond activations have higher positive spatial correlations in the latter). Thus, $\sqrt{1 - (1-q_c^2)^{1/M}} \le q_c(M)$. One can lower bound the l.h.s. by $q_c/\sqrt{M}$, thus obtaining $q_c(M) \ge q_c/\sqrt{M}$. 

The lower bound $q_c(M) \ge q_c/\sqrt{M}$ holds for various regular lattices with well-defined site-percolation thresholds~\cite{ZiffWiki}. However, the bound does break down for {\em fully-triangulated lattices}~\cite{Wie02}, and similar graph constructions where there are many more bonds connecting a smaller number of `key' sites, for which $p_c$ can be driven to zero, with $q_c$ held constant (see Fig.~\ref{fig:triangulated_results}). We conjecture that $q_c(M) \ge q_c/\sqrt{M}$ holds for all vertex-transitive graphs, which is backed by extensive numerical simulations.
\begin{figure}
\centering
\includegraphics[width=\columnwidth]{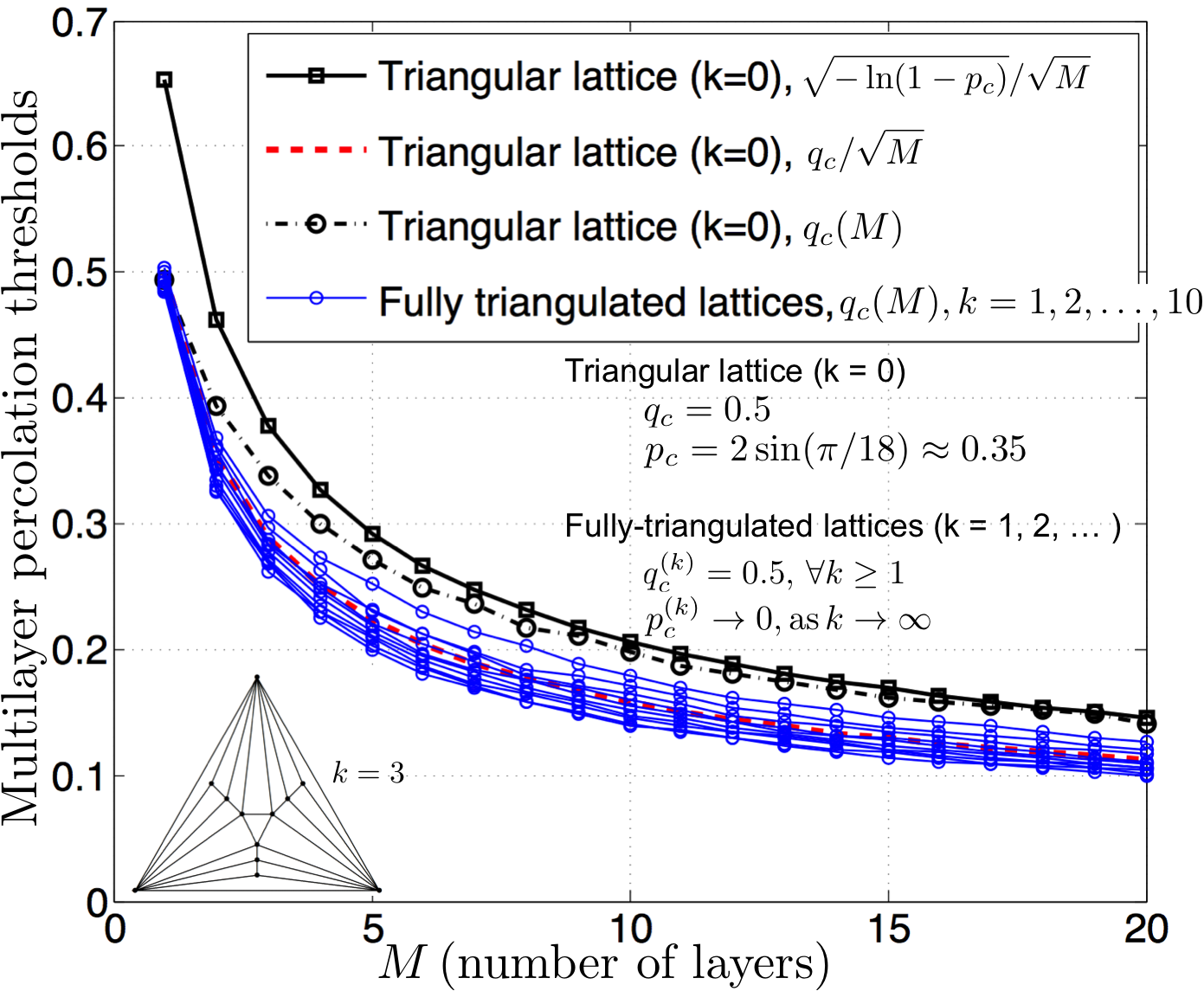}
\caption{(Color online) Plots of $q_c(M)$ for order-$k$ fully triangulated lattices for $k = 0, 1, \ldots, 10$, where $k=0$ corresponds to the simple triangular lattice. All these lattices have the same site-percolation threshold, $q_c^{(k)}=0.5$, but their bond percolation thresholds, $p_c^{(k)} \to 0$, as $k \to \infty$. The red-dashed line plots $0.5/\sqrt{M}$, therefore showing $q_c(M) \ge q_c/\sqrt{M}$ does not hold for these lattices with $k$ high enough.}
\label{fig:triangulated_results}
\end{figure}

\subsection{Lower bound on $q_c(M)$ for the Kagome lattice using the Scullard-Ziff triangle-triangle transformation}\label{sec:kagome}

Numerical evaluations of $q_c(M)$ for the kagome lattice are plotted in Fig.~\ref{fig:kagome_results_main}. For the kagome lattice, we will now prove a lower bound for $q_c(M)$ leveraging a star-triangle transformation technique developed by Scullard and Ziff~\cite{Scu06}, which was used to find exact site percolation thresholds for a large class of regular lattices. We will derive the following lower bound on $q_c(M)$:

\begin{equation}
q_c(M) \ge q_{\text{LB, Kagome}}(M)
\end{equation}
for all $M \ge 1$, where $q_{\text{LB, Kagome}}(M)$ is the unique root of the following polynomial $f_M(q)$, in $[0, 1]$:
\begin{eqnarray}
f_M(q) &=&[(1-q)(1+q-q^2)]^M + 2(1-2q^2+q^3)^M\nonumber\\
&-&[(1-q)^2)(1+2q)]^M - (1-q^2)^M   - Mq^2.
\label{eq:fMq}
\end{eqnarray}

$q_{\text{LB, kagome}}$, is seen to be extremely close, but not exactly equal, to $q_c/\sqrt{M}$, where $q_c = 1-2\sin \pi/18 \approx 0.6527$ is the site-percolation threshold of the kagome lattice~\cite{Scu06} (see Fig.~\ref{fig:kagome_results_main}). The fact that $q_{\text{LB, kagome}} \approx q_c/\sqrt{M} \le q_c(M)$ for the kagome lattice, is clearly not a coincidence, given the discussion in Section~\ref{sec:LBincorrect}.

The derivation of this bound uses a technique introduced by Scullard~\cite{Scu06}, who developed a site-to-bond transformation technique that leverages the duality of the triangular and honeycomb lattices to compute the critical surface for any correlated bond percolation process on the triangular lattice where the correlations are limited to within each triangular face~\cite{Scu06}. Fig.~\ref{fig:kagome_nz}(c) shows the setup. Imagine a triangular lattice formed by the shaded triangular faces, and for a moment ignore the dashed lines connecting the faces (i.e., collapse the three dashed lines into one node). The purpose of the dashed lines is to depict that there are no (bond or site existence) correlations in between faces. However, within each face, could there be a very complex correlated bond or site percolating network (but that network must be identical from face to face). Scullard showed that the critical condition for such a correlated-triangular lattice to percolate is given by the condition $P[A, B, C] = P[{\bar A}, {\bar B}, {\bar C}]$, where $P[A, B, C]$ is the probability that all three end nodes of a face are connected, and $P[{\bar A}, {\bar B}, {\bar C}]$ is the probability that none of the three nodes are connected to one another. A special case of this is that of correlated bond percolation, where each face has just three bonds $AB \equiv h$, $BC \equiv v$, and $CA \equiv l$, whose occupation probability is given by the joint distribution $P(h, v, l)$. The percolation condition for this case translates to:
\begin{equation}
P(v) + P({\bar v}, h, l) = P({\bar h}, {\bar l}).
\label{eq:Scullard_vlh}
\end{equation}
Scullard then observed that one way to generate such a correlated bond percolation on the triangular lattice---but one where the correlations do not traverse the lattice faces---is to consider a pure site percolating kagome lattice as shown in Fig.~\ref{fig:kagome_nz}(a), where all the orange (light) shaded triangles are the faces of a triangular lattice where the faces are detached from one another via the dashed lines as shown in Fig.~\ref{fig:kagome_nz}(c). If the site-occupation probability is $q$, it is easy to see that $P(v) = q^2$, $P({\bar h}, {\bar l})=(1-q)+q(1-q)^2$, and $P({\bar v},h,l)=0$, substituting which in~\eqref{eq:Scullard_vlh} yields a solution $q_c = 1-2\sin \pi/18 \approx 0.6527$. The last observation to be made is that if these orange (light) shaded triangular faces percolate (meaning there is a spanning cluster involving adjoining light-shaded faces), all the dashed bonds in Fig.~\ref{fig:kagome_nz}(c) in that spanning cluster must also be occupied. Reason being, due to three-point correlations, a dashed bond will be occupied with probability $1$ if two bonds on either side of it are open. More specifically, consider the bonds on triangles 2 and 3 in Fig.~\ref{fig:kagome_nz}(c). Under the transformation described above, if any one bond in each triangle is occupied, then both bounding sites on each of these bonds will be occupied. But if this is true, then it follows that the dashed bond between triangles 2 and 3 will also be occupied. Thus the two occupied bonds in the faces considered above are connected to one another via the dashed bond, just by virtue of being occupied themselves. Thus, by inserting the separating dashed triangles between the triangular faces, we have preserved the conditions for Eq.~\eqref{eq:Scullard_vlh} to be valid---that of neighboring triangular faces to be independent. Hence, $q_c = 1-2\sin \pi/18$, via this construction, is the pure site percolation threshold of the kagome lattice~\cite{Scu06}.

Now consider applying the above technique to the $M$-layer merged kagome lattice. We can still use Eq.~\eqref{eq:Scullard_vlh}, but it will only give a necessary condition for the $M$-layer lattice to percolate, since the existence of one bond each in triangles 2 and 3 will no longer necessitate the dashed bond separating them to be occupied, because the end nodes of the dashed line could now be occupied in non-intersecting layer sets. Therefore, the solution to~\eqref{eq:Scullard_vlh} will yield a lower bound to $q_c(M)$---the minimum value of single-layer site occupation probability such that the $M$-layer lattice will percolate. With a little combinatorics (detailed arguments omitted), one can calculate the following probabilities:
\begin{eqnarray}
P(v) &=& Mq^2, \\
P({\bar h},{\bar l}) &=& (1-q)^M(1+q-q^2)^M,\, {\text{and}} \\
P({\bar v}, h, l) &=& (1-q^2)^M + \left[(1-q)^2(1+2q)\right]^M \nonumber \\
&& -2(1-2q^2+q^3)^M,
\end{eqnarray}
substituting which in Eq.~\eqref{eq:Scullard_vlh}, one obtains the condition stated above to calculate the lower bound, $q_{\text{LB, Kagome}}(M) \le q_c(M)$. For completeness, we prove in Appendix~\ref{app:kagome} that $f_M(q)$ has a unique root in $(0, 1)$.

The lower bound $q_{\text{LB, kagome}}(M)$ is seen to be tantalizingly close to $q_c(1)/\sqrt{M}$ (plotted with red dashes in Fig.~\ref{fig:kagome_results_main}), but the two are not exactly equal. The magenta dots in Fig.~\ref{fig:kagome_results_main} plot the site percolation threshold $q_{c,{\rm stacked}}(M)$ of the 3D stacked kagome lattice (which is of interest due to its interesting magnetic properties~\cite{Koz12, Wan13a}). The simulations indicate that the site percolation threshold for a $50$-layer stacked lattice is roughly $q_{c,{\rm stacked}}(50) \approx 0.366$. This is in agreement with the numerically-evaluated site-percolation threshold of the infinite stacked kagome lattice, $q_{c,{\rm stacked}}(\infty) = 0.3346(4)$~\cite{van97}. 

\begin{figure}
\centering
\includegraphics[width=\columnwidth]{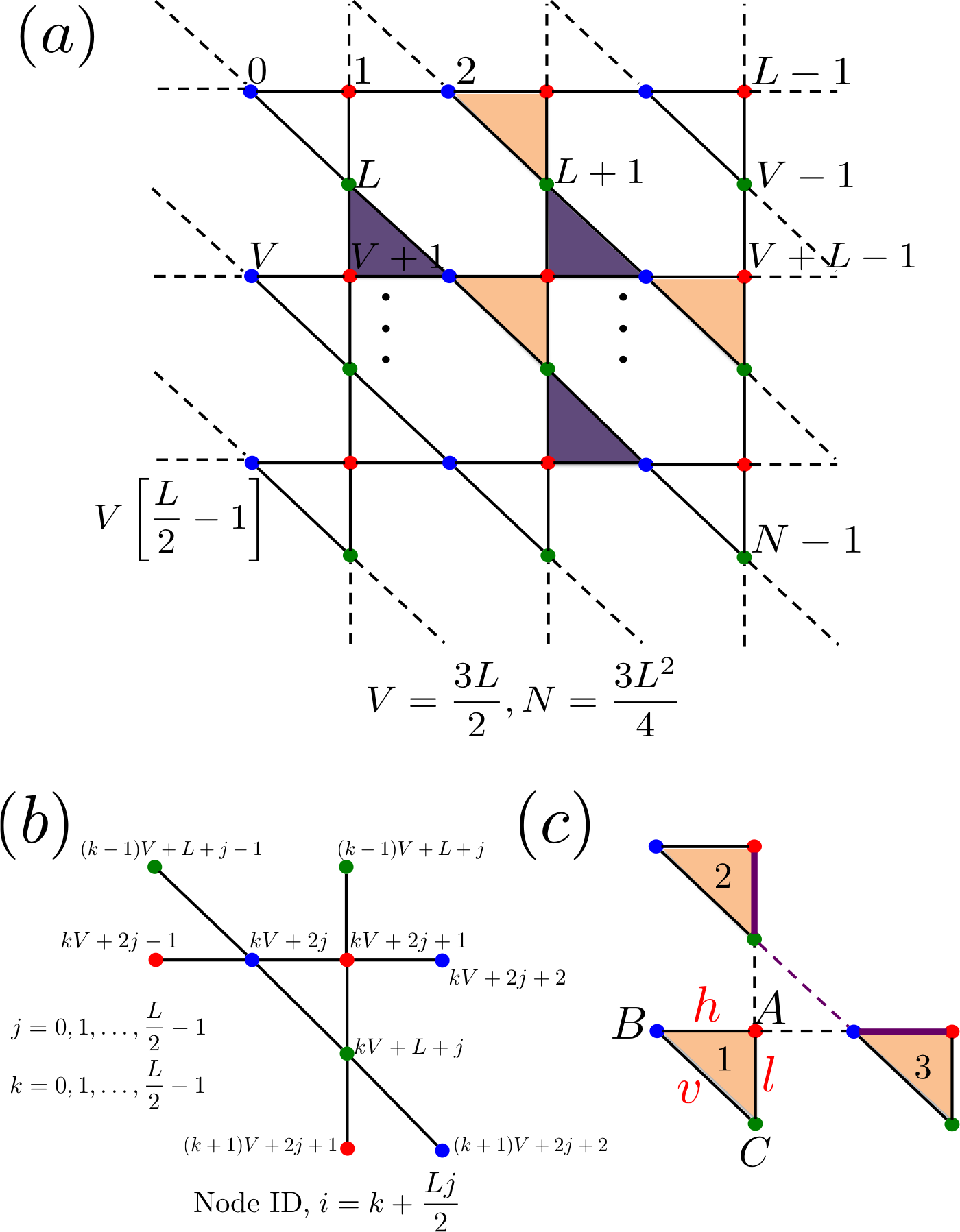}
\caption{(a) Casting the kagome lattice in the square grid, with one of every $4$ nodes in the grid removed. The dashed edges are ones that `wrap around'. (b) A `unit cell' of the kagome lattice showing the three node {\em types}, and also the node numbering convention we use to construct the nearest-neighbor matrix for use in the layered grid connectivity simulations. (c) Scullard's setup for calculating the critical region for correlated-bond percolation on a triangular lattice, with bond correlations limited within each face.}
\label{fig:kagome_nz}
\end{figure}
One interesting thing to note is that when $q > q_{\text{LB, Kagome}}(M)$, by the Scullard argument, all the orange (light) shaded triangles in the infinite multilayer Kagome lattice (see Fig.~\ref{fig:kagome_nz}(a)) will form a spanning cluster amidst themselves, i.e., assuming the purple (dark) shaded triangles do not come in the way of a pair of occupied nearest neighbor light-shaded triangles to get `connected'. But then, because of symmetry, when $q > q_{\text{LB, Kagome}}(M)$, all the purple (dark) shaded triangles should also have a spanning cluster (`ignoring' the light-shaded triangles). So, when $q$ is in the regime, $q_{\text{LB, Kagome}}(M) < q < q_c(M)$, the light-shaded triangles percolate, {\em and} the dark-shaded triangles percolate, but the full multilayer Kagome lattice does not percolate, which happens only when $q \ge q_c(M)$. This situation has some semblance with the notion of {\em explosive percolation}, that has been studied recently~\cite{Ach09}.

\section{Relationship with site-bond percolation and analytical approximations to the multilayer thresholds}\label{sec:sitebond}

As discussed above, the $M$-layer graph $G^{(M)}$ transitions from resembling site percolation to resembling bond percolation as $M$ goes from $1$ to $\infty$. Joint {\em site-bond} percolation is a well-studied extension of site and bond percolation~\cite{Ham80, Yan90, Tar99}, which is a more natural bridge between site and bond percolation, where each site is occupied and each bond is activated independently with probabilities $Q$ and $P$, respectively, and a path or a cluster can only be formed using occupied sites {\em and} activated bonds. This suggests that the two percolation models should be connected. The boundary separating the sub-critical and super-critical phases for site-bond percolation, the critical line $f_c(P,Q) = 0$, is not known exactly for any lattice. In Section~\ref{sec:translation}, we will establish a quantitative connection between site-bond and multilayer percolation, and show how one can translate the site-bond critical line $f_c(P,Q) = 0$ to an upper bound to $q_c(M)$, which is tight both at $M=1$ and $M \to \infty$. In Section~\ref{sec:approximation}, we will leverage a good approximation to the site-bond critical line to develop excellent approximations to $q_c(M)$ for general regular lattices that is only a function of the site and bond percolation thresholds $q_c$ and $p_c$, of the respective lattices. 

\subsection{Translating the site-bond critical boundary to a tight upper bound to the multilayer threshold}\label{sec:translation}

The multilayer graph can be thought of as being generated by a site-bond percolation process, where sites are independently occupied with probability $Q(q, M) = 1-(1-q)^M$, and conditioned on two nearest-neighbor sites being both occupied, the bond between them being active with probability $P(q, M) = \left[1-(1-q^2)^M\right]/\left[1-(1-q)^M\right]^2$. In other words, $P$ is the probability that two sites are occupied in at least one common layer, given they are both occupied. For $M=1$, we get $P = 1$ as expected and this reduces to pure site percolation. For $M>1$, there is one subtle difference between site-bond percolation and multilayer percolation mapped on the site-bond model as described above: the nearest neighbor bond activations have greater spatial correlation in multilayer percolation as compared to site-bond percolation, conditioned on an instance of the underlying i.i.d. site process generated with site-occupation probability $Q(q, M)$. For example, given three successive sites on a path are occupied, in the $(P, Q)$ site-bond process, the probability that both bonds between those three sites are occupied is $P^2$, whereas in multilayer percolation, the probability that both of those bonds are occupied (again conditioned on all three sites being occupied) is greater than $P^2$. This suggests that if the site-bond process on a graph $G$ percolates for a given $(P, Q)$, then for the same $(P, Q)$ value (translated to $q$ and $M$ as above), the multilayer percolation process on $G$ should also percolate. This suggests that if we know the site-bond critical line $f_c(P,Q)=0$ for a graph, and solve for $q^*(M)$ by substituting $P(q, M)$ and $Q(q, M)$ into the critical line equation, then the solution $q^*(M)$ will be an upper bound to the true multilayer percolation threshold $q_c(M)$ for that graph. If the critical line is only available numerically, we can find $q^*(M)$ by solving for the intersection of $f_c(P,Q)=0$ with $PQ^2 = 1-(1-q^2)^M$. Note that the above argument is not a formal proof that $q^*(M) \ge q_c(M)$, but we haven't found a single graph for which this upper bound is violated. The brown solid lines in Fig.~\ref{fig:squaregrid_results_main}(b) and Fig.~\ref{fig:randomgraph_results} plot this upper bound for the square grid and a random graph, respectively. This upper bound, unlike the upper bound $q_c(M) \le \sqrt{-\ln(1-p_c)}/\sqrt{M}$, is tight both at $M=1$ and $M \to \infty$, since it interpolates between pure-site and pure-bond percolation, which is a characteristic of both multilayer, and site-bond percolation.

For a random graph with degree distribution $\left\{p_k\right\}$, the critical line is a hyperbola given by $f_c(P,Q) = PQ - 1/C = 0$, with $C = \sum_k (k-1)kp_k/z$, $z = \sum_kkp_k$. The following thus readily follows:
For multilayer site-percolation on a random graph with degree distribution $\left\{p_k\right\}$, the $M$-layer thresholds satisfy, $q_c(M) \le q_{\rm UB, Random-Graph}$, where $q_{\rm UB, Random-Graph}$ is given by the unique root of the following polynomial $g_M(q)$, in $(0, 1]$:
\begin{equation}
g_M(q) = (1-q^2)^M - \frac1C(1-q)^M + (1/C) - 1.
\label{eq:randUB}
\end{equation}
See Appendix~\ref{app:sitebond_random} for proof of uniqueness of the root.

\subsection{A general approximation to $q_c(M)$ that is only a function of $p_c$ and $q_c$ of a regular lattice}\label{sec:approximation}

\begin{figure}[t]
\centering
\includegraphics[width=\columnwidth]{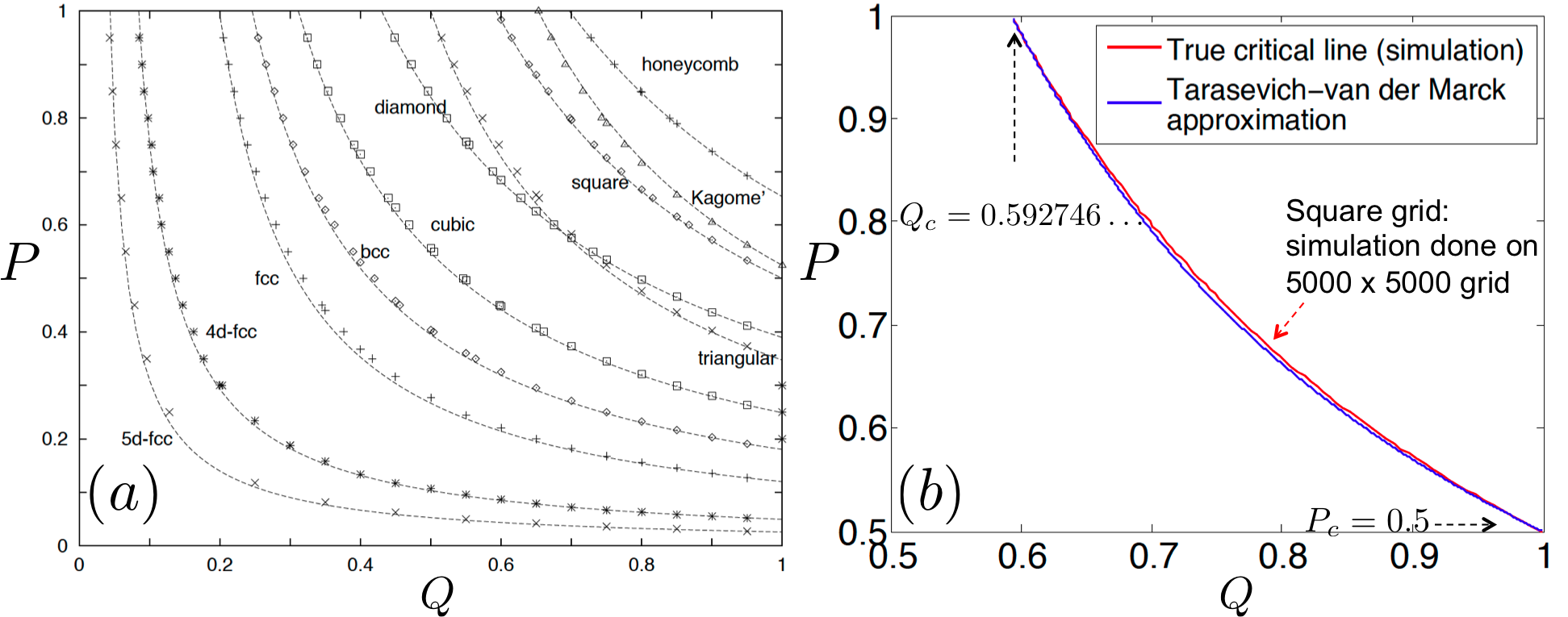}
\caption{Site-bond critical regions: comparison between true critical region and the Tarasevich-van der Marck approximation, (a) Figure from Ref.~\cite{Tar99}, (b) Refined simulations for the critical region for the square lattice using Newman-Ziff method on a $25$-million-node grid.}
\label{fig:sitebond}
\end{figure}
Yanuka and Englman proposed an approximation to $f_c(P,Q) = 0$ for regular lattices purely in terms of $q_c$ and $p_c$~\cite{Yan90}, which was later improved by Tarasevich and van der Marck~\cite{Tar99}, who showed that the critical line $f_c(P,Q) = 0$ for any lattice is well-approximated by $P(Q+A)=B$, with $A = (p_c-q_c)/(1-p_c)$ and $B = p_c(1-q_c)/(1-p_c)$ (see Fig.~\ref{fig:sitebond}). Therefore as per the discussion above, it is evident that substituting $Q = 1-(1-q)^M$ and $P = \left[1-(1-q^2)^M\right]/\left[1-(1-q)^M\right]^2$ into $P(Q+A)=B$ would result in a good approximation to $q_c(M)$ for a general lattice whose site and bond percolation thresholds ($q_c$ and $p_c$, respectively) are known. We thus have the following.

The multilayer threshold $q_c(M)$ for any graph $G$ is well-approximated by the unique solution of the following polynomial equation $f_{\rm SB}(q)=0$ in $(0, 1]$, where
\begin{equation}
f_{\rm SB}(q) = \left[1-(1-q^2)^M\right]\left[1-(1-q)^M+A\right]-B\left[1-(1-q)^M\right]^2 \nonumber
\label{eq:sitebondqc_approx}
\end{equation}
where $A(p_c, q_c)$ and $B(p_c, q_c)$ are as stated above.

\begin{figure}
\centering
\includegraphics[width=\columnwidth]{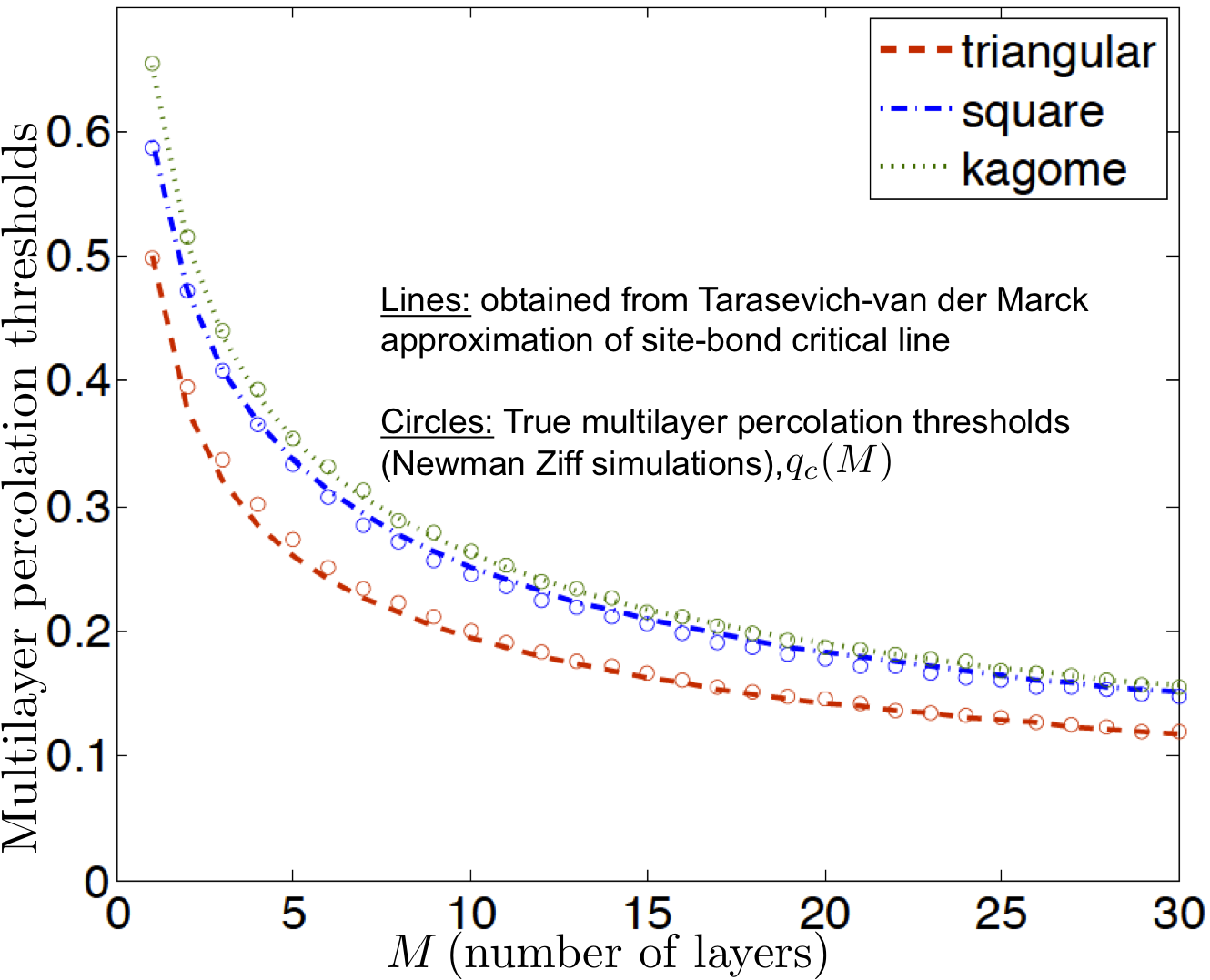}
\caption{(Color online) Comparison of $q_c(M)$---for the square, triangular and kagome lattices---to approximations obtained by using the Tarasevich-van der Marck approximations to the site-bond threshold.}
\label{fig:sitebond_thresholdapprox}
\end{figure}
The proof that $f_{\rm SB}(q)$ has a unique root in $(0, 1)$ for any given $p_c$, $q_c$ and $M$ is given in Appendix~\ref{app:approx_unique}. Fig.~\ref{fig:sitebond_thresholdapprox} shows the agreement of the approximations to $q_c(M)$ with the true thresholds $q_c(M)$ for the square, triangular and kagome lattices. Note that the approximations are neither strictly an upper nor a lower bound to $q_c(M)$ in general. The fact that the analytical approximations to $f_c(P,Q)=0$ are lower estimates of the true critical line, counters the fact that the translation of the true site-bond critical line should give us an upper bound to $q_c(M)$---thereby producing very good estimates of $q_c(M)$.

\section{Conclusions}\label{sec:conclusions}


In this paper, we studied the emergence of long-range connectivity in a specific kind of multilayer network, which have the following two properties: (1) each node in the multilayer network is a physical entity that is common to each of the $M$ layers, where the layers correspond to co-existing means of connectivity and each node may only be active in a subset of all the layers; and (2) each network layer is a subgraph of a common underlying connectivity graph $G(V, E)$, obtained by making each node in $G$ active in any given layer independently with probability $q$. The edge set $E$ defines all the possible connections the nodes in $V$ may have, some of which may remain dormant in a particular instance of the multilayer network, if the nodes an edge connects are not active in a common layer. 

We studied the properties of $q_c(M)$, the threshold value of the single-layer site-occupation probability $q$, when a spanning cluster begins to emerge in the $M$-layer network. We showed that $q_c(M) = \Theta(1/\sqrt{M})$, i.e., for the $M$-layer network to have long-range connectivity, each node must be active in $c\sqrt{M}$ layers on an average. We also showed that $c \to \sqrt{-\ln(1-p_c)}$ as $M \to \infty$, where $p_c$ is the bond percolation threshold of $G$. This arises from a realization that if each node is active in $\propto \sqrt{M}$ layers, then the induced bond activation events approach being i.i.d. as $M$ becomes large. We derived $q_c(M)$ exactly for random graphs with arbitrary degree distributions. Since $q_c(1) \equiv q_c$ is the site-percolation threshold of $G$, and the observation that $q_c(M)$ only depends upon the bond-percolation threshold $p_c$ when $M$ is large, led us to find a close relationship between the above multilayer percolation model and site-bond percolation, using which we translated a known approximation to the boundary of the site-bond critical region, to an excellent approximation of $q_c(M)$. 

One may consider various extensions of our work. For ease of analysis, we assumed the underlying population for generating each network layer to be identical, which could be relaxed in order to study a wider class of multilayer networks. Even if the assumption about the underlying connectivity graph is accurate, the occupation probability of each node need not be the same in each layer. One could consider alternative multilayer connectivity models driven by the application, such as the compatibility across communication modes or technologies, information traversal hierarchy (e.g., in military networks), causality of information flow (in temporally evolving networks) where the vertical axis in Fig.~\ref{fig:setup} would represent time, or stacked lattices (where nodes connect across nearest-neighbor layers only). Furthermore, the activity of users in each layer may evolve over time, and messages could be stored at a node and forwarded to a neighboring node at a later time instant when both nodes are simultaneously active in a common layer. It would also be interesting to analyze multilayer versions of the {\em susceptible-infected-recovered} (SIR) model of epidemic spread, such as in analyzing vaccination strategies with limited supplies when nodes can carry multiple viral strains. Finally, it may be interesting to incorporate `edge weights' (i.e., the information about how many layers an edge is active in) into the structural analysis. This will help study the `robustness' of the giant component, or that of a multilayer path. Our analysis in this paper was limited to the first-order effect---connectivity---which does not pay heed to the `strength' of an edge. Aside from robustness analysis, another interesting reason to consider edge weights would be to study multiple simultaneous inter-layer information flows. In such a scenario, if an edge that bridges two highly connected islands is only active in one layer, it may become a ``bottleneck" link, whereas being active in several layers would make the multi flow information traversal easier.

{\em Acknowledgements}---SG thanks Hari Krovi for useful discussions, and thanks Robert Ziff for providing useful comments on an early draft, and for pointing out prior results on site percolation thresholds for the stacked Kagome lattice. PN thanks Alain Jean-Marie for helping prove the uniqueness of the root for $f_{\rm SB}(q)=0$. The authors thank an anonymous referee for noting a small discrepancy in Fig.~\ref{fig:randomgraph_clustersize}, which helped identify a numerical issue in our Newman-Ziff simulations. This research was sponsored by the U.S. Army Research Laboratory (ARL) and the U.K. Ministry of Defense (MoD) and was accomplished under Agreement Numbers W911NF-06-3-0002.l (US ARL and UK MoD NIS-ITA) and W911NF-09-2-0053 (US ARL NS-CTA). CC acknowledges useful discussions with Dennis Goeckel, and support from the UMASS Amherst NSF grant CNS-1018464. This document does not contain technology or technical data controlled under either the U.S. International Traffic in Arms Regulations or the U.S. Export Administration Regulations.

\appendix

\section{The constant $c$ in the asymptotic scaling law for the multilayer site-percolation threshold, $q_c(M) \sim c/\sqrt{M}$}\label{app:exact}

In Section~\ref{sec:exact}, we showed that for our multilayer network defined on any underlying graph $G$, the threshold on the single-layer site-occupation probability which when exceeded makes a GCC appear in the $M$-layer network $G^{(M)}$, satisfies $q_c(M) \sim c/\sqrt{M}$ for $M$ large. We also argued that if $q$ is chosen to be any function that diminishes faster than $1/\sqrt{M}$, then all the bonds of $G^{(M)}$ are inactive with w.h.p., whereas if $q$ is chosen as any function of $M$ that diminishes even a little slower compared to $1/\sqrt{M}$, then all the bonds of $G^{(M)}$ are active w.h.p., thereby showing that the $1/\sqrt{M}$ scaling of $q$ is a sharp connectivity threshold for the $M$-layer network. In Section~\ref{sec:exact}, we also presented an intuitive argument to show that $c = \sqrt{-\ln(1-p_c)}$, where $p_c$ is the bond percolation threshold of $G$. In this Appendix, we will prove this rigorously for the case when $G$ is a tree. We believe that this result is true for an arbitrary graph $G$ (i.e., even one that has cycles) as long as $G$ has a well-defined bond percolation threshold. We leave the extension of the proof below, for this general case, for future work.

\begin{theorem}\label{thm:multilayersite_conj}
{\bf [Multilayer site percolation: the constant in the scaling]} For a homogenous multilayer network formed via merging $M$ random site-percolation instances of a graph $G$ with a tree topology and site-occupation probability $q$, the threshold $q_c(M)$ on the single-layer site-activation probability such that a spanning cluster appears satisfies: $q_c(M) \sim c/\sqrt{M}$ as $M \to \infty$, where $c = \sqrt{-\ln(1-p_c)}$ and $p_c$ is the bond-percolation threshold of $G$.
\end{theorem}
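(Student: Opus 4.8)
The plan is to prove the two-sided asymptotic $\sqrt M\,q_c(M)\to c$ by establishing $\limsup_{M\to\infty}\sqrt M\,q_c(M)\le c$ and $\liminf_{M\to\infty}\sqrt M\,q_c(M)\ge c$ separately. The upper inequality is already in hand: Section~\ref{sec:exact} gives $q_c(M)\le\sqrt{1-(1-p_c)^{1/M}}\le\sqrt{-\ln(1-p_c)}/\sqrt M=c/\sqrt M$ for every finite $M$. So the content is the matching lower bound, and the lever is that, since $G$ is a tree, the cluster of a fixed root $o$ in $G^{(M)}$ is \emph{exactly} a multitype branching process. Exploring the cluster outward, the only revealed information relevant to the subtree hanging off a visited vertex $v$ is its set $S_v\subseteq\{1,\dots,M\}$ of active layers, of size $N_v\sim\mathrm{Bin}(M,q)$; given $N_v$, each not-yet-visited child $w$ of $v$ (there are at most $d$ of them, $d$ the branching factor, with $p_c=1/d$ for a regular tree) is independently joined to $v$ with probability $1-(1-q)^{N_v}$, and, conditioned on being joined, has $N_w$ distributed as $\mathrm{Bin}(M,q)$ size-biased by the factor $1-\binom{M-N_w}{N_w}/\binom{M}{N_w}$. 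Thus the cluster of $o$ is a branching process with type $N_v$, and absence of percolation is equivalent to this process being subcritical --- and then, by a union bound over vertices, a.s.\ every cluster of $G^{(M)}$ is finite.

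Fix $c'<c$ and set $q=c'/\sqrt M$, so that $p':=1-e^{-c'^2}<p_c$. The crux is that $N_v$ concentrates on the relevant scale: $N_v\sim\mathrm{Bin}(M,c'/\sqrt M)$ has mean $c'\sqrt M$ with Chernoff-type tails, so $qN_v\to c'^2$ in probability; hence $1-(1-q)^{N_v}\to p'$, and since the size-bias factor $1-\binom{M-N_w}{N_w}/\binom{M}{N_w}\to p'$ is asymptotically constant over the typical range of $N_w$, the conditioning on a bond being active becomes asymptotically vacuous --- the exploration tends to i.i.d.$(p')$ bond percolation on the tree, which is subcritical. To make this rigorous, choose $\alpha\in(c',\,c^2/c')$ --- a nonempty interval because $c'<c$ forces $c'<c^2/c'$ --- call a visited vertex \emph{bad} if $N_v>j_{\max}:=\lceil\alpha\sqrt M\rceil$ and \emph{good} otherwise, and dominate the branching process by the two-type Galton--Watson process obtained by lumping types into \{good, bad\} with worst-case rates. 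A good vertex has mean offspring at most $dp''$, where $p'':=1-(1-q)^{j_{\max}}\to1-e^{-\alpha c'}<p_c$; a bad vertex has mean offspring at most $d$; and a Chernoff bound, together with the fact that the size-bias tilting factor is at most $1/q$, gives that any given child is bad with conditional probability at most $\tilde\beta_M:=q^{-1}\,\mathbb P(\mathrm{Bin}(M,q)>j_{\max})$, which $\to0$ since $q^{-1}$ is polynomial in $M$ while the binomial upper tail decays like $e^{-\Omega(\sqrt M)}$. The resulting $2\times2$ mean matrix has rows entrywise dominated by $(dp'',\,dp''\tilde\beta_M)$ and $(d,\,d\tilde\beta_M)$; the dominating matrix has rank one, hence spectral radius equal to its trace $dp''+d\tilde\beta_M\to d(1-e^{-\alpha c'})<1$. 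So for all large $M$ the process is subcritical, $G^{(M)}$ has no spanning cluster, and $q_c(M)>c'/\sqrt M$; letting $c'\uparrow c$ gives $\liminf_{M\to\infty}\sqrt M\,q_c(M)\ge c$ and completes the proof.

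The step I expect to be the main obstacle is the bookkeeping around the rare ``bad'' vertices, on two fronts. First, one must verify that conditioning on a bond being active size-biases $N_w$ only mildly --- the crude bound ``tilting factor $\le1/q$'' works here only because $1/q$ grows polynomially while the binomial upper tail is stretched-exponentially small, and this interplay should be spelled out rather than waved at. Second, for a general locally finite (non-regular) tree the clean offspring bound ``$d\cdot(\text{connection probability})$'' uses the maximum degree, whereas the sharp criterion for non-percolation is ``$(\text{branching number})\cdot p<1$''; for a regular tree the two coincide and the argument above is essentially complete, but the general bounded-degree case needs a finer comparison --- for instance, dominating the multilayer edge configuration by i.i.d.$(p'')$ bond percolation together with a sparse independent collection of ``extra'' edges attached at the bad vertices, and arguing that such a sparse perturbation of a subcritical process remains subcritical.
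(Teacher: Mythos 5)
Your proof takes a genuinely different route from the paper's. The paper establishes a stronger distributional statement: for a finite subtree $T$ and any fixed vector of link states $x(T)$, it shows $Q_{T,x(T)} \to \prod_{e}\bigl(\bar x_e e^{-a^2}+x_e(1-e^{-a^2})\bigr)$ by induction on tree height, using a sandwich lemma for binomial mixtures (two-sided Chernoff bounds on the number of occupied layers applied to the conditional recursion $Q_{T,x(T)}(n)$). The conclusion then is that $G^{(M)}$ "resembles" i.i.d. bond percolation with marginal $1-e^{-a^2}$, hence the threshold matches. Your proof instead goes directly at the percolation event via a multitype Galton--Watson exploration in which the type is $N_v$, the number of occupied layers; you dominate it by a two-type (good/bad) process and win by a spectral-radius calculation. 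Both proofs have the same technical engine --- concentration of $N_v\sim\mathrm{Bin}(M,c'/\sqrt M)$ at scale $\sqrt M$, together with the observation that the size-biasing from conditioning on an active bond is a polynomial-in-$M$ tilt and hence harmless against the stretched-exponential tail --- but your proof is the more self-contained route from concentration to a threshold statement: the paper never fully spells out how convergence of finite-dimensional bond distributions entails convergence of the critical point, whereas your argument bounds the spectral radius of a dominating mean matrix, which is exactly the subcriticality criterion for the cluster-of-the-root branching process.

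Two caveats. First, your proof leans on the finite-$M$ upper bound $q_c(M)\le\sqrt{-\ln(1-p_c)}/\sqrt M$, which the paper itself does not prove rigorously (it is derived in Section~\ref{sec:exact} from the heuristic that positive bond correlations imply stochastic domination over i.i.d.\ bond percolation, and Appendix~\ref{app:intuitive} is explicitly labeled "without proof"). So as written, your two-sided proof inherits that gap. But for a tree you can close it symmetrically: take $c'>c$, $\alpha\in(c^2/c',c')$, declare $v$ \emph{good} if $N_v\ge\lfloor\alpha\sqrt M\rfloor$, observe that a good vertex connects to each child with probability at least $1-(1-q)^{\lfloor\alpha\sqrt M\rfloor}\to 1-e^{-\alpha c'}>p_c$, and that a child, conditioned on being connected, is good except with probability $\le q^{-1}\,\mathbb P(\mathrm{Bin}(M,q)<\alpha\sqrt M)\to 0$; the good-only subprocess is then a supercritical GW process for large $M$, so percolation occurs for $q=c'/\sqrt M$, giving $\limsup\sqrt M\,q_c(M)\le c$ without invoking the domination heuristic. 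Second, the paper's theorem is stated for an arbitrary tree $G$, while your crisp version is really for a regular tree with $p_c$ the reciprocal of the branching factor (you write $p_c=1/d$; for a vertex-degree-$d$ regular tree it is $1/(d-1)$, so you are implicitly using $d$ for the offspring count). You correctly flag that for a general bounded-degree tree one should replace maximum degree with the branching number in the subcriticality criterion; that replacement (or, alternatively, running the argument along a cutset achieving the branching number, cf.\ Lyons' theorem) is necessary and nontrivial, and is the one place where your proposal, as stated, does not yet cover the full generality of the theorem.
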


\begin{proof}
Given the discussion in Section~\ref{sec:exact}, the only additional argument that we need in order to complete the proof is the fact that when $q = aM^{-1/2}$ for a constant $a > 0$, the bond activation events in the $M$-layer graph $G^{(M)}$ are statistically independent, when $M \to \infty$. In Proposition~\ref{prop:tree_ind} below, we will prove this bond independence statement for the case when $G$ is a tree. Let us first begin with a few preliminaries. 

If $(Z,N,q)$ is a binomial rv with mean $\mu=Nq $ the following Chernoff's bounds hold:
\begin{equation}
\label{clower}
P(Z\leq (1- \delta)\mu) \leq e^{-\mu \delta^2/2}
\end{equation}
and 
\begin{equation}
\label{cupper}
P(Z\geq   (1+ \delta)\mu) \leq e^{-\mu \delta^2/3}
\end{equation}
for any $\delta\in (0,1)$. From the above we get
\begin{eqnarray}
P((1-\delta)\mu \leq  Z \leq  (1+\delta) \mu) 
&\geq & 1-  e^{-\mu \delta^2/3} - e^{-\mu \delta^2/2} .
\label{lower-upper}
\end{eqnarray}

For any mapping $f : \{0,1,\ldots\}\to [0,1]$, define
\begin{equation}
\label{F_f}
F_f(N)=\sum_{n=0}^N\binom{N}{n} q^n (1-q)^{N-n} f(n).
\end{equation}

\begin{lemma}
\label{lem:bounds}
For any $\delta\in(0,1)$
\begin{equation}
\label{lower-upper-F}
g_f(N,\delta) \left(1-2e^{-\mu \delta^2/3}\right) \leq F_f(N) \leq h_f(N,\delta) +  2 e^{-\mu \delta^2/3}
\end{equation}
with
\begin{eqnarray}
\label{def-gfm}
g_f(N,\delta)&:=& \min_{\lfloor{(1-\delta )\mu}\rfloor\leq n\leq \lceil{ (1+\delta )\mu} \rceil} f(n)\\
h_f(N,\delta)&:=&\max_{\lfloor{(1-\delta )\mu}\rfloor\leq n\leq \lceil{ (1+\delta )\mu} \rceil} f(n).
\label{def-hfm}
\end{eqnarray}

\end{lemma}

\begin{proof}
Fix $\delta\in (0,1)$. We have
\[
F_f(N)=S_{f,1}(N,\delta)+S_{f,2}(N,\delta)+S_{f,3}(N,\delta)
\]
with 
\begin{eqnarray*}
S_{f,1}(N,\delta)&=&  \sum_{n=\lfloor{(1-\delta )\mu\rfloor}}^{\lceil{ (1+\delta )\mu\rceil}} \binom{N}{n} q^n (1-q)^{N-n} f(n) \\
S_{f,2}(N,\delta)&=&  \sum_{n=0}^{\lfloor{(1-\delta )\mu\rfloor}-1} \binom{N}{n} q^n (1-q)^{N-n} f(n)\\
S_{f,3}(N,\delta) &=& \sum_{n=\lceil{(1+\delta )\mu\rceil}+1}^N\binom{N}{n} q^n (1-q)^{N-n} f(n) .
\end{eqnarray*}

(a) {\em Upper bound.}\\

With (\ref{def-hfm}) we get
\begin{eqnarray*}
S_{f,1}(N,\delta) &\leq &h_f(N, \delta) \sum_{n=0}^{N} \binom{N}{n} q^n (1-q)^{N-n}=h_f(N,\delta),
\end{eqnarray*}
\begin{eqnarray*}
S_{f,2}(N,\delta) &\leq & \sum_{n=0}^{\lfloor{(1-\delta )\mu\rfloor}-1} \binom{N}{n} q^n (1-q)^{N-n} \quad \hbox{since } f \in[0,1]\\
&=& P(Z \leq \lfloor{(1-\delta )\mu\rfloor}-1)\\
&\leq& P( Z\leq (1-\delta )\mu) \leq e^{-\mu \delta^2/2},
\end{eqnarray*}
by using Chernoff's bound (\ref{clower}), and 
\begin{eqnarray*}
S_{f,3}(N,\delta)&\leq &  \sum_{n=\lceil{(1+\delta )\mu\rceil}+1}^N \binom{N}{n} q^n (1-q)^{N-n} \nonumber \\
&=& P(Z\geq \lceil{(1+\delta )\mu\rceil}+1)\\
&\leq & P( Z\geq (1+\delta)\mu)\leq e^{-\mu \delta^2/3},
\end{eqnarray*}
by using Chernoff's bound (\ref{cupper}). In summary,
\begin{equation}
F_f(N)\leq h_f(N, \delta)+  2 e^{-\mu \delta^2/3}.
\label{lower-bound-F}
\end{equation}

(b) {\em Lower bound.}\\

With (\ref{def-gfm}) we get
\begin{eqnarray}
F_f(N)&\geq & S_{f,1}(N,\delta)\nonumber \\
&\geq & g_f(N, \delta)  \sum_{n=\lfloor{(1-\delta )\mu\rfloor}}^{\lceil{ (1+\delta )\mu \rceil}} \binom{N}{n} q^n (1-q)^{N-n}\nonumber \\
 &=& g_f(N, \delta)  \left( P( \lfloor { (1-\delta )\mu \rfloor} \leq Z\leq   \lceil{ (1+\delta )\mu}\rceil \right)\nonumber \\
 &\geq &g_f(N, \delta)  \left(1-e^{-\mu \delta^2/2} -  e^{-\mu\delta^2/3}\right) \quad \hbox{from (\ref{lower-upper})}\nonumber \\
 &\geq & g_f(N, \delta) \left(1-2e^{-\mu \delta^2/3}\right).
 \label{upper-bound-F}
\end{eqnarray}
Combining (\ref{lower-bound-F}) and (\ref{upper-bound-F}) yields (\ref{lower-upper-F}).
\end{proof}\\

Throughout $\bar a =1-a$ for any $a\in [0,1]$. Let us represent a tree $T$ as $T=(v,T_1,\ldots , T_L)$, where $T_l$ is a subtree hanging off the root $v$.  We denote by $v_l$ the root of subtree $T_l$.\\

Let us represent the state of a tree $X(T)$ by $X(T)= \{(X_1, X(T_1)), \ldots , (X_L,X(T_L))\}$ where  $X_l \in\{0,1\}$ is the state of  link $(v,v_l)$ between $v$ and subtree $T_l$. Here $X_l=0$ means the link is inactive, otherwise it is active. Last, let $T = \emptyset$ denote the empty tree ($L = 0$).  Suppose that there are $M$ layers and let
$q$  be the probability that a site is occupied in one layer. \\

Let $Q_{T,x(T)} = P(X(T) = x(T))$ denote the probability that the link state of the tree is $x(T)$.  It can be expressed as 
\begin{equation}
\label{QTxT}
Q_{T,x(T)}= \sum_{n=0}^M \binom{M}{n} q^n (1-q)^{M-n} Q_{T,x(T)}(n).
\end{equation}
Here $Q_{T,x(T)}(n)$ is the probability that the link state of tree $T$ is $x(T)$ conditioned on the number of occupied layers at the root being $n$. It satisfies the recursion:
\begin{eqnarray}
\lefteqn{Q_{T,x(T)}(n)= \prod_{l=1}^L \Biggl\{  \bar x_l (1-q)^n\sum_{i=0}^{M-n} \binom{M-n}{i}  q^i(1-q)^{M-n-i}} \nonumber\\
&&\times Q_{T_l,x(T_l)}(i) + x_l  \sum_{j=1}^n\binom{n}{j}q^j (1-q)^{n-j}\nonumber\\
&&\times \sum_{i=0}^{M-n} \binom{M-n}{i}  q^i(1-q)^{M-n-i}  Q_{T_l,x(T_l)}(i+j)\Biggr\}
\label{QTxn}
\end{eqnarray}
with $x(T)=((x_1,x(T_1)), \ldots, (x_L,x(T_l)))$. By convention $Q_{T_l,x(T_l)}(\cdot)=1$ if $T_l$ is only composed of the node $\nu_l$. \\

In the r.h.s. of (\ref{QTxn}) the term $ \bar x_l (\bar q^n\sum_{i=0}^{M-n} \binom{M-n}{i}  q^i(1-q)^{M-n-i} Q_{T_l,x(T_l)}(i)$ accounts for the fact that if link $(v,v_l)$ is inactive ($x_l=0$)
then  node $v_{l}$ cannot share any layer with node $v$ (this occurs with probability $(1-q)^n$) but otherwise can have any other layers among the $M-n$ remaining layers; the term $x_l \sum_{i=0}^{M-n} \binom{M-n}{i} q^i(1-q)^{M-n-i}   \sum_{j=1}^n\binom{n}{j}q^j(1-q)^{n-j} Q_{T_l,x(T_l)}(i+j)$
accounts for the fact that if link $(v,v_l)$ is active ($x_l=1$)  then node $v_l$  must share at least one layer with node  $v$ but otherwise can take any other layers. The product
accounts for the fact that subtrees $T_1,\ldots,T_L$ have stochastically independent behavior conditioned on the state, $n$, of node $v$.\\

In particular,
\begin{equation}
\label{QTxTn-height-one}
Q_{T,x(T)}(n)=\prod_{l=1}^N \left(  \bar x_l (1-q)^n  + x_l (1-\bar q^n)\right)
\end{equation}
if $T_l=(\nu_l)$ for $l=1,\ldots,L$ or, equivalently, if $T$ is only composed of its root $v$ and of its leaves $v_1,\ldots,v_L$.

Interverting the last two sums in (\ref{QTxn}) gives 
\begin{eqnarray}
\lefteqn{Q_{T,x(T)}(n)=\prod_{l=1}^L \Biggl\{  \bar x_l (1-q)^n\sum_{i=0}^{M-n} \binom{M-n}{i}  q^i(1-q)^{M-n-i}} \nonumber\\
&&\times Q_{T_l,x(T_l)}(i) + x_l \sum_{i=0}^{M-n} \binom{M-n}{i}   q^i(1-q)^{M-n-i} \nonumber\\
&& \times  \sum_{j=1}^n\binom{n}{j} q^j (1-q)^{n-j}  Q_{T_l,x(T_l)}(i+j)\Biggr\}.
\label{QTxn-al}
\end{eqnarray}

\begin{proposition}\label{prop:tree_ind}
Assume that $q=a M^{-1/2}(1+o(1))$. Then,
\begin{eqnarray}
\label{prop:asympt-tree}
&&\lim_{M\to\infty} Q_{T,x(T)}  = \prod_{l=1}^L\Biggl\{\left( \bar x_l e^{-a^2} + x_l \left(1-e^{-a^2} \right)\right)  \nonumber \\
&&\times \prod_{j=1}^{n_l} \left( \bar x_{l,j} e^{-a^2} + x_{l,j}\left(1-e^{-a^2} \right)\right)\Biggr\}
\end{eqnarray}
with $x(T)=((x_1,x(T_1)), \ldots, (x_L,x(T_L)))$,  $n_l$ the number of links in the subtree $T_l$ and $(x_{l,j}, j=1,\ldots,n_l)$ the state of these links. \\

Eq. (\ref{prop:asympt-tree}) shows that the links become stochastically independent of each other as $M$ becomes large.
\end{proposition}

\begin{proof}
Throughout we assume that $q= a M^{-1/2}(1+o(1))$. 

Consider first the  tree $T=(v, v_1,\ldots,v_N)$ of height one, composed of  the root $\nu$ and of the  leaves $\nu_1,\ldots,\nu_N$.
From  (\ref{QTxTn-height-one}) we see that
\begin{equation}
\label{height-one}
\lim_{M\to\infty} Q_{T,x(T)}(f(M))=\prod_{l=1}^L \left( \bar x_l e^{-a^2} + x_l \left(1-e^{-a^2}\right)\right)
\end{equation}
for any mapping $f$ such that $f(M)=a M^{-1/2}(1+o(1))$.\\

Let $T=((x_1,x(T_1)), \ldots, (x_L,x(T_L)))$ be an arbitrary tree, with $n_l$ the number of links in the subtree $T_l$ and $(x_{l,j}, j=1,\ldots,n_l)$ the state of these links.
We will prove that:
\begin{eqnarray}
\label{induction-tree}
&&\lim_{M\to\infty} Q_{T,x(T)}(f(M)) = \prod_{l=1}^L\left( \bar x_l e^{-a^2} + x_l \left(1-e^{-a^2} \right)\right)  \nonumber \\
&&\times \prod_{j=1}^{n_l} \left( \bar x_{l,j} e^{-a^2} + x_{l,j}\left(1-e^{-a^2} \right)\right)
\end{eqnarray}
for any mapping $f$ such that $f(M)=a M^{-1/2}(1+o(1))$.\\

We use an induction argument to prove (\ref{induction-tree}). We know from (\ref{height-one}) that (\ref{induction-tree}) is true for any tree of height one. Assume that it is
true for any tree  of height $k$ and let us prove that it is still true for a tree of height $k+1$.\\

Let $T=((x_1,x(T_1)), \ldots, (x_L,x(T_L)))$ be an arbitrary tree of height $k+1$, with $n_l$ the number of links in the subtree $T_l$ and $(x_{l,j}, j=1,\ldots,n_l)$ the state of these links. Subtrees
$(T_l)_l$ have height at most $k$ with at least one having a height of $k$.\\

Define $\mu(m)=m q$ and $\delta(m)=1/m^\alpha$ with $0<\alpha <1/4$. 
From (\ref{QTxT}), (\ref{QTxn-al}) and Lemma  \ref{lem:bounds} we obtain the following two-sided bounds 
for $Q_{T, x(T)}$ and $Q_{T,x(T)}(n)$:
\begin{equation}
Q_{T,x(T)}(a_{0}(M))(1-\gamma(M)) \leq Q_{T,x(T)} \leq Q_{T,x(T)}(a_{1}(M)) + \gamma(M)
\label{al-Qxj}
\end{equation}
with
\[
a_{0}(m):=\arg\min\left\{Q_{T,x(T)}(i) : \alpha(m)\leq i\leq \beta(m) \right\}
\]
\[
a_{1}(m):=\arg\min\left\{Q_{T,x(T)}(i) :   \alpha(m)\leq i\leq \beta(m) \right\},
\]
and
\begin{eqnarray}
&&\prod_{l=1}^l\Biggl(\bar x_l \bar q^n  Q_{T_l,x(T_l)}(b_{l,0}(M-n))  +x_l (1-\bar q^n)   (1-\gamma(M-m))^L\nonumber\\
&&\leq Q_{T,x(T)}(n)\leq 
\prod_{l=1}^L\Biggl( \bar x_l\bar q^n  Q_{T_l,x(T_l)}(b_{l,1}(M-n)) +x_l \left(1-\bar q^n\right) \nonumber\\
&&\times  Q_{T_l,x(T_l)}(c_{l,1}(M-n))+ 2 \gamma(M-n)\Biggr)
\label{al-Qxjn}
\end{eqnarray}
with
\begin{eqnarray*}
b_{l,0}(m)&:=& \arg\min\Biggl\{ Q_{T_l,x(T_l)}(i) : \alpha(m) \leq i\leq  \beta(m) \Biggr\}
\label{b0j}\\
b_{l,1}(m)&:=& \arg\max\Biggl\{ Q_{T_l,x(T_l)}(i) :  \alpha(m) \leq i \leq \beta(m)  \Biggr\}
\label{b1j}\\
c_{l,0}(m)&:=&\arg \min\Biggl\{ Q_{T_l,x(T_l)}(i+r) :  \alpha(m) \leq i \leq \beta(m),\\
&&\quad\quad 1\leq r\leq m\Biggr\}
\label{c0j}\\
c_{l,1}(m)&:=&\arg \max\Biggl\{ Q_{T_l,x(T_l)} (i+r) :  \alpha(m) \leq i \leq \beta(m), \\
&&\quad\quad1\leq r\leq m\Biggr\},
\label{c1j}
\end{eqnarray*}
where $\alpha(m):=\lfloor{(1-\delta(m))\mu(m)}\rfloor$, $\beta(m):= \lceil{(1+\delta(m))\mu(m)}\rceil$,
$\gamma(m):=2 e^{-\mu(m)\delta(m)^2/3}$.

Since $b_{l,0}(M-n)$, $c_{l,0}(M-n)$, $b_{l,1}(M-n)$ and  $c_{l,1}(M-n)$ all behave as $a\sqrt{M}(1+o(1))$  when $n=a\sqrt{M}(1+o(1))$ and $M$ is large, we 
can use the induction assumption  to replace $n$ by $a\sqrt{M}(1+o(1))$ in both the lower bound and the upper bound in (\ref{al-Qxjn}). 
By letting now $M\to\infty$ in the latter expressions  we obtain from the induction hypothesis  
that both bounds converge to  $\prod_{l=1}^L\left\{\left( \bar x_l e^{-a^2} + x_l \left(1-e^{-a^2} \right)\right)
\prod_{j=1}^{n_l} \left( \bar x_{l,j} e^{-a^2} + x_{l,j}\left(1-e^{-a^2} \right)\right)\right\}$, which proves that
\begin{eqnarray}
\label{limit-QTxTn}  
&&\lim_{M\to\infty} Q_{T,x(T)}(f(M))=\prod_{l=1}^L\Biggl\{\left( \bar x_l e^{-a^2} + x_l \left(1-e^{-a^2} \right)\right) \times \nonumber \\
&&\prod_{j=1}^{n_l} \left( \bar x_{l,j} e^{-a^2} + x_{l,j}\left(1-e^{-a^2} \right)\right)\Biggr\}.
\end{eqnarray}
From $a_{0}(M)= a\sqrt{M}(1+o(1))$ and  $a_{1}(M)=a\sqrt{M}(1+o(1))$, (\ref{limit-QTxTn}) and the bounds in (\ref{al-Qxj}), we finally get 
\begin{eqnarray}
&&\lim_{M\to\infty} Q_{T,x(T)}= \prod_{l=1}^L\Biggl\{\left( \bar x_l e^{-a^2} + x_l \left(1-e^{-a^2} \right)\right)  \nonumber \\
&&\times \prod_{j=1}^{n_l} \left( \bar x_{l,j} e^{-a^2} + x_{l,j}\left(1-e^{-a^2} \right)\right)\Biggr\},
\end{eqnarray}
which concludes the proof.
\end{proof}

This concludes the proof of Theorem~\ref{thm:multilayersite_conj}, for the case when $G$ is a tree.
\end{proof}

\begin{remark}
We believe the asymptotic independence property holds even when $G$ is an arbitrary graph and that $c$ takes the same value as above.  This is supported by extensive simulations.
\end{remark}

\begin{remark}
In Appendix~\ref{app:intuitive}, we argue (without proof) that for any (finite) $M \ge 1$, $q_c(M) \le \sqrt{-\ln(1-p_c)}/\sqrt{M}$. 
\end{remark}

Finally, it is intuitive that the cluster sizes must grow with the number of layers $M$. In other words, the single-layer site-occupation probability $q$ at which the $M$-layer network percolates (i.e., has a spanning cluster appear) should decrease as $M$ increases. The following monotonicity property on $q_c(M)$ makes this intuition precise. 

\begin{proposition}\label{prop:monotone}
$q_c(M)$ is a non-increasing function of $M$, i.e., $q_c(M) \ge q_c(M+1), \forall M$. \end{proposition}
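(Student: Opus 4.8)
The plan is to establish the stronger pathwise fact that, under a natural coupling, one instance of the $M$-layer graph sits inside one instance of the $(M+1)$-layer graph, and then read off the threshold inequality from the monotonicity of the percolation event in the edge set. Fix the underlying connectivity graph $G$ and the single-layer site-occupation probability $q$. For every node $v \in V$ and every layer index $m \in \{1,2,\ldots\}$, let $Y_{v,m}$ be i.i.d.\ Bernoulli$(q)$ random variables, with $Y_{v,m}=1$ meaning that $v$ is occupied in layer $m$. Construct $G^{(M)}$ from the variables $\{Y_{v,m}: m \le M\}$ and $G^{(M+1)}$ from $\{Y_{v,m}: m \le M+1\}$. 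Since layer $M+1$ is just an independent Bernoulli$(q)$ copy, the $G^{(M+1)}$ built this way is itself an instance of the $(M+1)$-layer model with the \emph{same} $q$.

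The key observation is a subgraph containment: under this coupling $G^{(M)} \subseteq G^{(M+1)}$ on the common vertex set $V$. Indeed, a bond $(u,v) \in E$ is active in $G^{(M)}$ if and only if $Y_{u,m}=Y_{v,m}=1$ for some $m \le M$; any such $m$ also satisfies $m \le M+1$, so the same bond is active in $G^{(M+1)}$. (The corresponding containment of occupied vertices is automatic, since an active bond forces its endpoints to be occupied.) Consequently the cluster of any given vertex in $G^{(M)}$ is a subset of its cluster in $G^{(M+1)}$, and the size of the largest component of $G^{(M+1)}$ is at least that of $G^{(M)}$.

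Now I invoke the standard monotonicity of percolation in the active-edge set. For an infinite lattice $G$, ``$G^{(M)}$ percolates'' is the increasing event that a fixed origin $o$ lies in an infinite cluster, so $\theta_M(q) := P(o \leftrightarrow \infty \text{ in } G^{(M)}) \le \theta_{M+1}(q)$; for a random graph drawn from the configuration model, conditioning on the realization of $G$, the size of the largest component is a non-decreasing function of the active-edge set, so a GCC in $G^{(M)}$ w.h.p.\ implies a GCC in $G^{(M+1)}$ w.h.p. In either setting, for any $q > q_c(M)$ the $M$-layer graph percolates, hence so does the $(M+1)$-layer graph, which gives $q_c(M+1) \le q$; letting $q \downarrow q_c(M)$ yields $q_c(M+1) \le q_c(M)$, as claimed.

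This is essentially a routine coupling argument, so I do not expect a genuine obstacle. The two points that deserve a line of care are: (i) verifying that the appended layer $M+1$ carries the correct marginal law, so that the coupled object really is an $(M+1)$-layer instance with the same $q$ rather than some conditioned version; and (ii) that $q_c(M)$ is well-defined as a sharp transition point, which follows from the analogous monotone coupling in $q$ --- realising occupations through uniform $[0,1]$ variables $U_{v,m}$ and declaring $v$ occupied in layer $m$ iff $U_{v,m} \le q$ --- so that the implication ``$q > q_c(M) \Rightarrow$ percolation'' used above is legitimate.
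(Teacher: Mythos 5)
Your proof is correct and is essentially the same argument the paper gives: the paper invokes a ``sample path argument'' in one sentence (adding the $(M+1)$-st layer can only add connected sites), and you have simply spelled out that sample-path coupling explicitly via the i.i.d.\ Bernoulli array $\{Y_{v,m}\}$ and the resulting edge-set containment $G^{(M)} \subseteq G^{(M+1)}$.
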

\begin{proof}
This is easily proven using sample path arguments. Given a site-occupation probability $q$ for each of $M$ layers, the addition of the $(M+1)$-st layer with the same site-occupation probability can only increase the number of connected sites.  Consequently, if a spanning cluster appears in the network with site-occupation probability $q$, then it can only increase in size with the addition of the $(M+1)$-st layer. One practical import of this is that one can limit the search for $q_c(M+1)$ to the interval $(0,q_c(M)]$. 
\end{proof}

\section{An intuitive argument, using the coupon-collector problem, to show that: $q_c(M) \le \sqrt{-\ln(1-p_c)}/\sqrt{M}$}\label{app:intuitive}

In this Appendix, we provide an alternative intuitive argument to show that $\sqrt{-\ln(1-p_c)}/\sqrt{M}$ is an upper bound to $q_c(M)$ for {\em all} $M \ge 1$. In Section~\ref{sec:exact}, we provided one intuitive argument for the same.

In the classic coupon collector problem, one draws, with replacement, from a box containing $n$ distinct coupons. It is known that $m$ draws fetch, roughly, $n(1-e^{-m/n})$ distinct coupons. Let us say each of the $n = |E|$ bonds of $G$ is a coupon. The expected number of bonds in each layer $G_i$ is $nq^2$, since $q^2$ is the marginal probability of a bond. Therefore, each layer can be regarded as roughly $nq^2$ coupon draws. Hence $M$ layers would be seen as $m = Mnq^2$ coupon draws. $q_c$ is the value of $q$ that corresponds to the number of draws that will fetch just enough distinct coupons (bonds) for the $M$-layer graph to percolate. For standard i.i.d. bond percolation, $np_c$ distinct bonds (on an average) would be sufficient for percolation. However, since bond activations are spatially correlated in each layer, $np_c$ coupons will be more than enough for percolation. Hence we get, $np_c \ge n(1-e^{-(Mnq_c^2)/n})$, which translates to $q_c(M) \le \sqrt{-\ln(1-p_c)}/\sqrt{M}$. As the reader would notice, there are several loose ends to the above argument in mapping multilayer percolation to the classic coupon collector problem. To name some: (1) each layer does not draw {\em exactly} $nq^2$ coupons (it is an expected number); (2) furthermore, the $nq^2$ coupon draws within one layer are done {\em without replacement} (duplicate bonds can arise only from different layers); (3) a graph with exactly $np_c$ distinct bonds does not guarantee percolation. That number is the average number of bonds at percolation when each bond is drawn independently at random; and finally (4) the bonds in multilayer percolation are {\em not} drawn independently at random. Bond activations are spatially correlated. However, it is this last point, as we argue above, that leads to $\sqrt{-\ln(1-p_c)}/\sqrt{M}$ being an {\em upper} bound to $q_c(M)$, and the first three points can be dealt with using ideas similar to those used in the proof of Theorem~\ref{thm:multilayersite_conj}.

\section{Multilayer random graph: proof that $q_c(M)$ is the smallest solution of ${\rm det}(A(q))=0$}\label{app:random}

\begin{proposition}
Assume that $C>1$. For a random graph with node degree distribution $p_k$, $q_c(M)$ is the smallest solution of
\begin{equation}
\det (\bA (q)) = 0
\label{eq:detAq_app}
\end{equation}
within the interval $[0,1]$, with $A$ defined in Eq.~\eqref{eq:Adefinition}.
\end{proposition}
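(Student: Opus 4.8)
The plan is to reduce the claim to two facts about the scalar function $D(q) := \det(\bA(q))$ on $[0,1]$: first, that $\bA(q)$ being non-singular for all $q' \in [0,q]$ forces $\mu_0(q) < \infty$ (no GCC), and second, that at the smallest zero $q^\star$ of $D$, the solution vector $\mu(q)$ of Eq.~\eqref{eq:AmuT} genuinely blows up (a GCC appears), so that $q_c(M) = q^\star$. Together with the already-established observation that a GCC exists in the $M$-layer random graph iff $\mu_0$ diverges, and that $\mu_0$ diverges only if $\bA(q)$ is singular, these give $q_c(M) = q^\star$ exactly.

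First I would set up the sign/monotonicity structure. Note $\bA(0) = I$, so $D(0) = 1 > 0$, and at $q=1$ one computes (as in the $M=1,2$ cases in the text) that $D(1) = 1 - C < 0$ since $C>1$; hence $D$ has at least one zero in $(0,1)$ and $q^\star$ is well-defined. The key structural point is that $\bA(q)$ is, up to the identity, minus a nonnegative matrix: writing $\bA(q) = I - C\,\bR(q)$ where $R_{ij}(q) = \bigl(\binom{M}{j} - \binom{M-i}{j}\bigr) q^j (1-q)^{M-j} \ge 0$ entrywise. So for $q < q^\star$ I want to argue that the spectral radius $\rho(C\bR(q)) < 1$, which makes $\bA(q)^{-1} = \sum_{n\ge 0} (C\bR(q))^n$ a convergent Neumann series with nonnegative entries, forcing $\mu(q) = \bA(q)^{-1}\bb^T$ to be finite and componentwise nonnegative; then $\mu_0(q)$ in Eq.~\eqref{eq:mu0} is a finite nonnegative combination, so no GCC. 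The link between "$\rho(C\bR(q))<1$ for all small $q$" and "$q^\star$ is the first zero of $D$" is the Perron–Frobenius theory: for a nonnegative matrix, $\det(I - C\bR(q)) = 0$ can first occur (as $q$ increases from $0$) only when the Perron eigenvalue of $C\bR(q)$ reaches $1$; before that, all eigenvalues of $C\bR(q)$ lie strictly inside the unit disk in modulus, so $\det(I - C\bR(q)) \ne 0$ and in fact stays positive by continuity from $D(0)=1$. I would make the monotonicity of the Perron eigenvalue in $q$ precise using that $R_{ij}(q)$ is, at least on the relevant range, increasing in $q$ (or at worst invoke irreducibility of $\bR(q)$ plus continuity to conclude the Perron root crosses $1$ exactly at $q^\star$); this is where I would lean on Proposition~\ref{prop:monotone} and the numerical observation that $D$ has a unique zero, to avoid pathologies.

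For the reverse direction — that a GCC actually appears once $q \ge q^\star$ — I would argue that at $q = q^\star$ the matrix $\bA(q^\star)$ is singular with (by Perron–Frobenius, using irreducibility of $\bR$, which holds because every layer-count $1,\dots,M$ can reach every other through the binomial overlap terms) a one-dimensional kernel spanned by a strictly positive left eigenvector $\boldsymbol{w} > 0$. Since $\bb = (F_1(1),\dots,F_M(1))$ has all strictly positive entries (each $F_n(1) = 1-(1-q)^n$ type quantity is positive for $q\in(0,1)$), we have $\boldsymbol{w}\cdot \bb^T > 0$, so Eq.~\eqref{eq:AmuT} has \emph{no finite solution} at $q^\star$: the only way to interpret this is that the average cluster size diverges, i.e. $\mu_0 \to \infty$ as $q \uparrow q^\star$, which by the GCC$\iff$divergence equivalence means the percolation threshold is at most $q^\star$. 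Combined with the first part ($q < q^\star \Rightarrow$ no GCC), this pins $q_c(M) = q^\star$, the smallest root of $D(q)=0$ in $[0,1]$.

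The main obstacle I anticipate is the rigorous handling of the monotonicity of the Perron eigenvalue of $C\bR(q)$ in $q$ — i.e., genuinely ruling out the Perron root touching $1$, dipping back below, and returning, which is what would be needed to be sure $q^\star$ is where percolation first occurs rather than merely a root of $D$. The clean way around this is to not prove monotonicity of the eigenvalue directly but instead to prove monotonicity of the \emph{percolation event} (which is exactly Proposition~\ref{prop:monotone}: $q_c(M)$ is well-defined, and for $q$ above the threshold a GCC persists), and then use the sandwich: no GCC for $q<q^\star$ (Neumann series argument, valid as long as $D>0$ on $[0,q]$, which holds on $[0,q^\star)$ by continuity and $D(0)=1$) and GCC for $q = q^\star$ (positive-eigenvector obstruction argument). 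Monotonicity then propagates the GCC to all $q \ge q^\star$, closing the loop without ever needing fine spectral control of $\bR(q)$ beyond irreducibility and nonnegativity.
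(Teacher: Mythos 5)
Your proof is correct, and it takes a genuinely different route from the paper's. The paper's Appendix~\ref{app:random} argument is short and elementary: writing $\mu_n(q) = \det(\bA_n(q))/\det(\bA(q))$ by Cramer's rule for $q$ near a zero $q_0$ of the determinant, it \emph{asserts without proof} that some numerator $\det(\bA_{n^\star}(q_0)) \ne 0$, concludes $\mu_{n^\star}(q) \to \infty$ as $q \to q_0$, hence $\mu_0$ diverges and a GCC exists at $q_0$; sample-path monotonicity then identifies $q_c(M)$ with the smallest such zero. Your route is more structural: you isolate the decomposition $\bA(q) = I - C\,\bR(q)$ with $\bR(q)$ entrywise nonnegative (in fact strictly positive, hence irreducible, for $q \in (0,1)$), and invoke Perron--Frobenius. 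For $q < q^\star$, continuity of the spectral radius (equal to $0$ at $q=0$) together with the constant sign of $\det(\bA(q))$ on $[0,q^\star)$ gives $\rho(C\bR(q)) < 1$, so the Neumann series $\bA(q)^{-1} = \sum_{n\ge 0}(C\bR(q))^n$ is a finite nonnegative inverse and $\mu_0$ is finite---no GCC. At $q^\star$, the strictly positive left Perron vector $\boldsymbol{w}$ and the strict positivity of $\bb$ force $\boldsymbol{w}\,\bb^T > 0$, so $\bA(q^\star)\mu^T = \bb^T$ has no finite solution. This is valuable: your positive-eigenvector obstruction is precisely what the paper's unproven claim needs, since $\boldsymbol{w}\,\bb^T > 0$ is equivalent to $\bb^T \notin \mathrm{Range}(\bA(q^\star))$, which---given the one-dimensional kernel that Perron--Frobenius supplies---is equivalent to at least one Cramer numerator being nonzero. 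You are also right, in your closing paragraph, to stop short of claiming the Perron root is monotone in $q$ (it need not be, because $R_{ij}(q)$ contains a decreasing factor $(1-q)^{M-j}$); the continuity argument on $[0,q^\star)$ together with the sample-path monotonicity of the percolation event (Proposition~\ref{prop:monotone}) is the correct way to close the loop, and it is the same closure the paper itself uses. In short, your proof buys rigor and fills a gap; the paper's is shorter but leaves a nontrivial linear-algebra fact to the reader.
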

\begin{proof}
Since $\bA(0)$ is the identity matrix, $\det(\bA(0))=1$. On the other hand, it is easy to see that $\det(\bA(1))=1-C<0$ under the assumption that $C>1$.
Therefore, the mapping $q\to\det(\bA(q))$ has at least one zero in $[0,1]$. Let $q_0$ be such a zero. For $q$ in the vicinity of $q_0$, $\det(\bA(q))\not=0$
since $\det(\bA(q))$ is a polynomial in the variable $q$. By Cramer's rule,
\begin{equation}
\mu_n(q)=\frac{\det(\bA_n(q))}{\det\left({\bA(q)}\right)}
\label{Cramer}
\end{equation}
for $q$ in the vicinity of $q_0$ with $q\not=q_0$,   where $\bA_n(q)$ is the matrix formed by replacing the $n$-th column of $\bA(q)$ by  $\bb^T$. 
We claim that there exists at least one $n^\star \in\{1,\ldots,M\}$ such that $\det(\bA_{n^\star}(q_0))\not=0$. 
Letting $q\to q_c$, we get from (\ref{Cramer}) that $\mu_{n^\star}(q_0):=\lim_{q\to q_0} \mu_{n^\star} (q)=\infty$. Therefore,
$\mu_0(q_0)=\infty$ from (\ref{eq:mu0}) since $\mu_0(q)$ is expressed as a linear combination of $\mu_1(q),\ldots, \mu_M(q)$ with positive coefficients.
This shows that there is an infinite size spanning cluster (a giant component) when $q=q_0$.
 Via sample path arguments one can show that there exists an infinite size spanning cluster for $q>q_c(M)$, where
$q_C(M)$ is the smallest zero of $\det(\bA(q))$ in $[0,1]$.
 \end{proof}

Numerically, it is easy to verify that $\det (\bA (q))$ has a unique zero in $[0,1]$ for all $M \ge 1$ as long as $C > 1$. We conjecture that this is always true.

\section{Uniqueness of the root of $f_M(q)=0$, the analytical lower bound to $q_c(M)$ for the kagome lattice}\label{app:kagome}

In this Appendix, we provide a proof for the fact that $f_M(q) = 0$ in Eq.~\eqref{eq:fMq} has a unique solution in $(0, 1)$. Let us take $M \geq 1$. We can rewrite $f_M(q)$ as
\[
f_M(q)=3(1-2q^2+q^3)^M -(1-3q^2 +2q^3)^M -(1-q^2)^M-Mq^2.
\]
The derivative of $f_M(q)$ is $f^\prime_M(q)=2M qg_M(q)$ with
\begin{eqnarray}
g_{M}(q)&=& \frac{3}{2}(3q-4)(1-2q^2+q^3)^{M-1} \nonumber \\
&&+ 3(1-q)(1-3q^2 +2 q^3)^{M-1}\nonumber \\
&&+ (1-q^2)^{M-1}-1.
\label{g}
\end{eqnarray}
We want to show that $g_M(q)\leq 0 $ for all $q\in [0,1]$, or equivalently that the mapping $q\to f_M(q)$ is decreasing in $[0,1]$,
which will show that $f_M(q)$ has a unique zero in $[0,1]$ since $f_M(0)=1$ and $f_M(1)=-M$.\\

Since 
\[
(1-2q^2+q^3) - (1-3q^2 +2q^3)=q^2(1-q)\geq 0
\]
for all $q\in [0,1]$, we have that
\[
(1-2q^2+q^3)^{M-1}\geq  (1-3q^2 +2q^3)^{M-1}\geq 0
\]
for all $q\in [0,1]$. So, noting that $3q-4<0$, we have from (\ref{g}),
\begin{eqnarray}
g_{M}(q)&\leq& \left( \frac{3}{2}(3q-4) +3(1-q)\right)(1-3q^2 +2 q^3)^{M-1} \nonumber \\
&&+ (1-q^2)^{M-1}-1\nonumber\\
&=&-3 \left(1-\frac{q}{2}\right) (1-3q^2 +2 q^3)^{M-1} \nonumber \\
&&+ (1-q^2)^{M-1}-1
\label{up-g}\\
&\leq & 0.\nonumber
\end{eqnarray} 
The last inequality follows by observing that the first term in the right hand side of (\ref{up-g}) is always negative and $(1-q^2)^{M-1}-1\leq 0$ for all $q\in [0,1]$. Hence, the proof that $f_M(q) = 0$ has a unique root in $(0, 1)$ is complete.

\section{Uniqueness of the root of $g_{M}(q)=0$ in $(0, 1]$}\label{app:sitebond_random}

It is simple to see why $g_M(q)$ has a unique root in $(0, 1]$. We start by taking the derivative of $g_M(q)$ with respect to $q$,
\begin{eqnarray*}
g^\prime_M(q)&=&-2 q M (1-q^2)^{M-1} + \frac{M}{C}(1-q)^{M-1}\\
&=&\frac{M}{C}(1-q)^{M-1}\left( 1- 2 q C (1+q)^M\right).
\end{eqnarray*}
Let us define $h_M(q)= 1- 2 q C (1+q)^M$, so that
\begin{equation}
g^\prime_M(q)=\frac{M}{C}(1-q)^{M-1} h_M(q).
\label{g'M}
\end{equation}
We have
\begin{eqnarray*}
h^\prime_M(q)&=&-2 C(1+q)^M - 2 q C M (1+q)^{M-1}\\
&=& -2 C (1+q)^{M-1} (1+qM).
\end{eqnarray*}
Since $h^\prime_M(q)<0$ for all $q\in [0,1]$, the mapping $q\to h_M(q)$ is strictly decreasing in $[0,1]$.
From $h_M(0)=1$ and $h_M(1)=1-C2^{M+1}<0$ (since $C>1$) there exists $q_0$ such that
$h_M(q)>0$ for $q\in [0,q_0)$, $h_M(q_0)=0$ and $h^\prime_M(q)<0$ for $q\in (q_0,1]$. Hence, from (\ref{g'M}), we conclude that $g_M(q)$ is increasing in $[0,q_0)$ and
decreasing in $(q_0,1]$. Since $g_M(0)=0$ and $g_M(1)=1/C-1<0$, which shows that $g_M(q)$ has a unique zero in $(0,1]$.

\section{Uniqueness of the root of $f_{\rm SB}(q)=0$ in $(0, 1)$}\label{app:approx_unique}

In this Appendix, we will prove that $f_{\rm SB}(q)$ has a unique root in $(0, 1)$ for any given $p_c$, $q_c$ and $M$. When $M=1$, $f_{\rm SB}(q)$ has a unique zero in $(0,1)$ at $q=B-A=q_c$. Assume from now on that $M\geq 2$. The equation $f_{\rm SB}(q)=0$ is equivalent to 
\begin{equation}
\label{new-eq}
\phi(q):= \frac{1-(1-q)^M +A}{B} = \frac{(1-(1-q)^M)^2}{1-(1-q^2)^M} := \psi(q).
\end{equation}
Since $\phi^\prime(q)=M(1-q)^{M-1}/B>0$ for $q\in (0,1)$ (as $B>0$)
we conclude that  the mapping $q\to \phi(q)$ is strictly increasing in $(0,1)$.\\

Let us now show that the mapping $q\to\psi(q)$ is strictly decreasing $(0,1)$.
We find 
\begin{eqnarray}
\psi^\prime(q)&=&2 M(1-(1-q)^M)(1-q)^{M-1} \times \nonumber \\
&&\frac{\left[1-(1-q^2)^M-q(1+q)^{M-1}(1-(1-q)^M)\right]}{(1-(1-q^2)^M)^2} \nonumber \\
&=& \frac{2 M(1-(1-q)^M)(1-q)^{M-1}} {(1-(1-q^2)^M)^2)}\, \nu_M(q)
\end{eqnarray}
with $\nu_M(q):= 1- q(1+q)^{M-1}-(1-q)(1-q^2)^{M-1}$.
We have $\nu_2(q)=-q^3$. Assume that $\nu_M(q)<0$ for $q\in (0,1)$ for $M=2,\ldots,N$ and let us show that $\nu_{N+1}(q)<0$
for $q\in (0,1)$. \\

We have
\[
\nu_{N+1}(q)= \nu_{N}(q) - q^2 (1+q)^{N-1}(1-(1-q)^{N}).
\]
Since $q^2 (1+q)^{N-1}(1-(1-q)^{N})>0$ for $q\in (0,1)$ we conclude from the induction hypothesis that $\nu_{N+1}(q)<0$ for $q\in (0,1)$.
This proves that $\psi^\prime(q)<0$ for $q\in (0,1)$, which in turn shows that $\psi(q)$ is strictly decreasing in $(0,1)$.\\

Because a strictly increasing function and  a strictly decreasing function  can intersect at most once, we have proved that (\ref{new-eq}) has at most
one solution in $(0,1)$. It has exactly one solution since $\phi(0)=A/B\leq 0$ (as $A\leq 0$ and $B>0$), $\phi(1)=(1+A)/B=1/p_c> 1$, $\psi(0)=M$ and $\psi(1)=1$, which shows that $\phi(x)$ and $\psi(x)$ intersect exactly once in $(0,1)$. This completes the proof. 

\section{Adaptation of the Newman-Ziff algorithm for multilayer percolation}\label{app:NZ}

In order to run multilayer simulations on lattices and random graphs, we used an adaptation of the Newman-Ziff technique~\cite{New00}---an efficient algorithm to simulate site and/or bond percolation systems whose runtime is essentially linear (in the number of nodes or sites). This algorithm has been extensively used for numerical analyses of percolating systems, and extended to analyzing random graphs~\cite{Cal00}, continuum percolation on an Eulidean space~\cite{Mer12}, and inter-connected networks~\cite{Sch13}. We adapted the Newman-Ziff algorithm (see Appendix~\ref{app:NZ} for details) to simulate multilayer percolation for several 2D regular lattices---the square, triangular, kagome, and the family of fully-triangulated lattices~\cite{Wie02}. Let us first review the basic algorithm.

\subsection{The Newman-Ziff algorithm}
The underlying idea is based on a union-find algorithm~\cite{Tar75}. One chooses a random order in which sites (or bonds) are occupied sequentially, and the algorithm keeps track of all the connected components at each step using a union-find data structure. Each cluster is represented by one {\em root} member, and every member $i$ is linked to a unique parent $p(i)$ in the same cluster as $i$, except the roots who are their own parents. There are two main functions: {\tt findroot(i)} and {\tt merge(i,j)}. {\tt findroot(i)} follows the links (viz., $p(p(i)) \ldots$) from $i$ to the root of its cluster, $r(i)$. Every time a new member $i$ is added to the percolating system, the algorithm iterates through all nearest neighbors of the new member, and for each neighbor $j$ that is occupied, it calls the {\tt merge(i,j)} routine, which uses {\tt findroot(i)} and {\tt findroot(j)} to find $r(i)$ and $r(j)$ and declares the one whose cluster is larger to be the parent of the other (viz., $p(r(i)) = r(j)$ if $i$'s cluster is smaller than $j$'s), unless of course $r(i) = r(j)$ in which case they are already in the same cluster. The runtime of {\tt findroot(i)} is proportional to the length of the path from $i$ to $r(i)$, which an inductive argument shows can never exceed $\log_2N$, where $N$ is the system size. Newman-Ziff used a trick called {\em path compression} to make these paths---averaged over the execution of the algorithm---even smaller. When {\tt findroot(i)} traces its way to $r(i)$, noting that $r(i)$ is the root for each object $j$ along the path from $i$ to $r(i)$, it assigns $p(j) = r(i)$ for all those objects, linking each one directly to the root of the cluster. So next time we call {\tt findroot}, it would work in a single step. With this modification to {\tt findroot()}, the amortized cost of the {\tt findroot} and {\tt merge} operations (cost per operation, averaged over many operations), is {\em essentially} $O(N)$. To be precise, the amortized cost per step is proportional to the inverse of the Ackermann function $\alpha(N)$, which grows incredibly slowly with $N$~\cite{Tar75}.

\subsection{Simulation of layered percolation}

The first step in setting up the simulation for layered percolation is to create the nearest-neighbor matrix $A$, where $A(i, k) = j$ means node $j$ is the $k$-th neighbor of node $i$, $1 \le k \le d(i)$, where $d(i)$ is the degree of node $i$. As an illustration of the construction of the nearest-neighbor matrix, we show in Fig.~\ref{fig:kagome_nz} how we numbered the nodes for the kagome lattice, and assigned values to ${A(i, k)}$ for $0 \le i \le N-1$, $0 \le k \le 3$, for a kagome lattice with $N$ nodes each of degree, $d = 4$. 

Algorithm~\ref{alg:newmanziff} summarizes the remainder of the algorithm {\sc MultiLayerNewmanZiff}$(A,N,M)$, that takes as inputs, the nearest-neighbor matrix $A$, the number of nodes $N$ and the number of layers $M$, and produces estimates of the multilayer thresholds, $q_c(m)$, $m=1, \ldots, M$. The algorithm, as written below, accurately estimates the sizes of the largest cluster, ClusterSize$(m,i)$ for an $m$-layer merged lattice as a function of the single-layer site-occupation probability, $q  \equiv i/N$. For lattices, the maximum cluster sizes ClusterSize$(m,i)$ have a sharp discontinuity at $q = q_c(m)$ (i.e., $i = i_c \equiv Nq_c(m)$) for an infinite size lattice (see Fig.~\ref{fig:squaregrid_results_main}(a) for instance), for $N$ large enough. Therefore, the $i$ value where the discrete slope of the maximum cluster size (ClusterSize$(m,i)$ $-$ ClusterSize$(m,i-1)$) is maximum gives a pretty good estimate of $i_c(m)$ (hence, that of $q_c(m)$), which suffices for the purposes of this paper. A more accurate way to estimate the threshold $q_c(m)$ involves estimating the {\em wrapping probabilities}, which are probabilities that a cluster wraps around the lattice boundary conditions, either vertically, or horizontally, or both. 

\begin{algorithm}
\caption{\label{alg:newmanziff}{\sc MultiLayerNewmanZiff}$(A, N, M)$}
\begin{algorithmic}[1]
\Require  The nearest-neighbor matrix $A$ of a graph $G$, and the total number of layers, $M$. 

\For{$m=0$ to $M-1$}
	\State Initialize: cluster size dummy variable, $C(m)=0$;
	\State Generate a random permutation, $\pi_m()$ of $[1, \ldots, n]$;
	\For{$i=0$ to $N-1$}
		\State ptr$(m,i)$=EMPTY;
	\EndFor
\EndFor
\For{$i=0$ to $N-1$}
	\For{$m=0$ to $M-1$}	
		\State occupied$(m^\prime, \pi_{m^\prime}(i))=0$;
		\For{$m^\prime = 0$ to $m$}		
			\State $s_1=\pi_{m^\prime}(i)$;
			\State occupied$(m^\prime,s-1)=1$;

			\If {${\rm ptr}(m,s_1) \ne$ EMPTY}
				\State $r_1=$ {\sc FindRoot}$(s_1,m)$;
				\If {$-{\rm ptr}(m,r_1) >$ big$(m)$}
					\State big$(m)$= $-{\rm ptr}(m,r_1)$;
				\EndIf
			\Else
				\State ${\rm ptr}(m,s_1)=-1$;
				\State $r_1=s_1$;
			\EndIf
	
			\For{$j = 0$ to $d-1$}
				\State $s_2=A(s_1,j)$;
				\If {occupied$(m^\prime,s_2)=1$}
					\State $r_2=$ {\sc FindRoot}$(s_2,m)$;
					\If {$r_2 \ne r_1$}
						\If {${\rm ptr}(m,r_1)>{\rm ptr}(m,r_2)$}
							\State Increment ${\rm ptr}(m,r_2)$ by ${\rm ptr}(m,r_1)$;
							\State ${\rm ptr}(m,r_1) = r_2$;
							\State $r_1 = r_2$;
						\Else
							\State Increment ${\rm ptr}(m,r_1)$ by ${\rm ptr}(m,r_2)$;
							\State ${\rm ptr}(m,r_2) = r_1$;
						\EndIf
						\If {$-{\rm ptr}(m,r_1)>C(m)$}
							\State $C(m) = -{\rm ptr}(m,r_1)$;
						\EndIf
					\EndIf
				\EndIf
			\EndFor
		\EndFor
		\State ClusterSize$(m,i)=C(m)/N$;
	\EndFor
\EndFor		
\For {$m=0$ to $M-1$}
	\State slope$_{\rm max}$ $= 0$;
        \For{$i=1$ to $N-1$}
		\State slope = ClusterSize$(m,i)$-ClusterSize$(m,i-1)$;
		\If {slope $>$ slope$_{\rm max}$}
			\State slope$_{\rm max}$ = slope;
		        \State $q_c(m)=i/N$;
		\EndIf
	\EndFor
\EndFor
\State \Return multilayer percolation thresholds, $q_c(m)$, $0 \le m \le M-1$.

\Function{FindRoot}{$i, m$}
	 \If {${\rm ptr}(m,i)<0$} 
	 	\State \Return $i$;
    	\EndIf
\State \Return ${\rm ptr}(m,i) =$ {\sc FindRoot}$({\rm ptr}(m,i),m)$;
\EndFunction
\end{algorithmic}
\end{algorithm}

\end{document}